\newcommand\numberthis{\addtocounter{equation}{1}\tag{\theequation}}
\newtheorem{lemma}{Lemma}
\newtheorem{sublemma}{Lemma}[lemma]
\newtheorem{remark}{Remark}
\newtheorem{prop}{Proposition}
\theoremstyle{definition}
\newtheorem{exmp}{Example}
\newcommand{\bc}{\begin{center}}
\newcommand{\ec}{\end{center}}
\newtheorem{thm}{Theorem}
\def\cred{\color{red}}
\def\cbl{\color{blue}}
\def\cbr{\color{brown}}
\def\cgr{\color{green}}
\def\cor{}
\def\blk{\color{black}}
\begin{document}



\def\br{\biggr}
\def\bl{\biggl}
\def\Br{\Biggr}
\def\Bl{\Biggl}
\def\be\begin{equation}
\def\ee{\end{equation}}
\def\bea{\begin{eqnarray}}
\def\eea{\end{eqnarray}}
\def\f{\frac}
\def\n{\nonumber}
\def\l{\label}

\title{An approximation scheme and non-Hermitian re-normalization for description of atom-field system evolution}

\author{B. Ahmadi}
\email{borhan.ahmadi@ug.edu.pl}
\address{International Centre for Theory of Quantum Technologies, University of Gdansk, Jana Bażyńskiego 1A, 80-309 Gdansk, Poland}
\author{R. R. Rodr\'iguez}
\address{International Centre for Theory of Quantum Technologies, University of Gdansk, Jana Bażyńskiego 1A, 80-309 Gdansk, Poland}
\author{R. Alicki}
\address{International Centre for Theory of Quantum Technologies, University of Gdansk, Jana Bażyńskiego 1A, 80-309 Gdansk, Poland}
\author{M. Horodecki}
\address{International Centre for Theory of Quantum Technologies, University of Gdansk, Jana Bażyńskiego 1A, 80-309 Gdansk, Poland}

\begin{abstract}
Interactions between a source of light and atoms are ubiquitous in nature. The study of them is interesting on the fundamental level as well as for applications. They are in the core of Quantum Information Processing tasks and in Quantum Thermodynamics protocols. 
However, even for two-level atom interacting with field in rotating wave approximation there exists no exact solution. This touches a basic problem in quantum field theory, where we can only  calculate the transitions 
in the time asymptotic limits (i.e. minus and plus infinity),
while we are not able to trace the evolution.
In this paper we want to get more insight into the time evolution of a total system of a two-level atom and a continuous-mode quantum field. 
We propose an approximation, which we are able to apply systematically to each order of Dyson expansion, resulting in  greatly simplified formula for the evolution of the combined system at any time. Our tools include a proposed novel, {\it non-Hermitian} re-normalization method.  
As a sanity check, by applying our framework, we derive the known optical Bloch equations.
\end{abstract}

\maketitle
\section{Introduction}

In quantum information processing manipulating two-level systems (qubit), such as trapped ions \cite{Vandersypen,Gulde,Schmidt,Leibfried}, interacting with coherent control laser fields is vitally important for realization of qubit operations since such manipulations constitute the basic components of quantum gates. Within the framework of chemical applications the coherent control field is a fundamental tool in manipulation of ultra-cold molecules, used for instance in laser cooling, photo-disassociation and photo-association, which recently has gained a fast-growing attention \cite{Weyland,Gacesa,Ji,Kon,Green,Kallush,Stevenson,Levin,Carini}. Coherent control fields also play an important role in quantum thermodynamic processes such as short-cuts to adiabaticity technique in which the external field is designed for minimization or maximization of crucial quantities such as time or energy cost of the process \cite{Odelin,Olaya,Sinha,Puebla,Prieto}.

However, the control fields are usually treated classically. Considering the field to be classical causes us to be oblivious to some fundamental quantum effects of the field on the quantum gate. These quantum effects may consist of entanglement of the field with the qubit or spontaneous emission and also the Lamb Shift due to the vacuum effects \cite{Gerry,Breuer}. 
The problem is that even the evolution of two-level atom interacting with light is not exactly solvable, even in rotating wave approximation. In the case of coherent state of the field, one often uses for description of the atom an approximation known as optical Bloch equations \cite{bloch1946nuclear}. The description of dynamics of both atom and field is even more problematic. The exact solution exists just for initial vacuum state of light \cite{Friedrichs,Lee}.

This is related to the general problem in  quantum field  theory where the behavior of the system is examined using the S-matrix (scattering matrix)   which relates the initial state of the system to its final state in asymptotic limits ($t_{initial}\rightarrow-\infty$ and $t_{final}\rightarrow\infty$) \cite{Shankar}. Thus, little is known about the system during the intermediate times. 


The aim of this paper is to propose an approximation scheme, that allows for analytic examination of  time evolution of a total system of  atom and  field for all times, in all orders of Dyson series. As a validation of our approach, we then show that the obtained formulas for the approximate evolution reproduce the well-known  optical Bloch equations.  
 
We consider the interaction of a two-level atom with a continuous-mode quantum field by taking all modes of the field directly into consideration. Using a novel re-normalization method we will derive a greatly simplified formulae for the evolution of the whole system at any time such that it depends only on normally-ordered creation and annihilation operators of the field, and two parameters: the decay rate of the atom and the Lamb Shift in atomic frequency. We will finally show that 
the optical Bloch equations
\cite{bloch1946nuclear} directly and rigorously emerge from our formalism by successively applying our approximation without any further assumption. Apart from the proposed approximation scheme, we heavily base on the "dissipative re-normalization" that we introduce in this paper, which can be of separate interest. Namely, we shift from interaction to self Hamiltonian a non-Hermitian operator, which Hermitian part corresponds to standard re-normalization (related to Lamb-shift) while anti-Hermitian part corresponds to decay rate. 

Our article is organized as follows. In Section \ref{II}, we briefly explain the approximation used in our calculations as well as the re-normalization scheme. In Section \ref{III}, we re-normalize the Hamiltonian of a continuous mode electric field acting on a two level atom. Later, in section \ref{IV}, we provide justifications for our approximation making use of simpler models such as Friedrichs-Lee \cite{Friedrichs,Lee,lonigroquantum}. We compute the explicit evolution of the re-normalized S-propagator (Theorem \ref{thm1}, get the differential equation for S and check our results with particular examples in section \ref{V}.  In section \ref{VI}, we illustrate our results from Theorem \ref{thm1} by means of two opposite examples: the most classical case in which the laser starts in a coherent state, and when its initial state is a continuous superposition of a photon in different modes. For the former, we obtained the well-known Optical Bloch equations which serves us as a sanity check. Finally, in section \ref{VII}, we present the conclusions of our paper and pose possible future research directions.
\section{Outline of the results}\label{II}

As schematically depicted in Fig. \ref{Fig0} we will consider the interaction of a two-level atom with a quantized continuous-mode laser field. The Hamiltonian will be re-normalized which gives rise to a non-Hermitian free Hamiltonian for the atom and consequently a non-Hermitian interaction Hamiltonian
\begin{equation}\label{nf}
H_{A_r}=\hbar(\omega_0+\delta\omega-i\gamma)|1\rangle\langle 1|,
\end{equation}
\begin{equation}\label{n-int}
H_{I_r}=H_{I}-\hbar(\delta\omega-i\gamma)|1\rangle\langle 1|,
\end{equation}
where $\omega_0$ is the atomic frequency and $\gamma$ the decay rate of the atom and $\delta\omega$ the ``Lamb shift'' in atomic frequency. From now on, we use the notation in which $\hbar=1$. The role of performing re-normalization is the following. The Hermitian part - is more or less standard. It is just to put the infinities (or cut-off dependent terms), emerging from commutation relations at each order of Dyson expansion, into single  parameter - the Lamb shift,
whose value we assume can be taken from higher order theory - i.e. quantum electrodynamics. The non-Hermitian part is a novel trick, that in a sense separates spontaneous emission from the evolution, and allows to arrive at an simplified form of the evolution. At the moment it is a technical tool, whose deeper interpretation is still awaiting. Due to the emerged non-Hermitian interaction Hamiltonian the evolution of the atom-field in the interaction picture becomes \textit{non-unitary}, i.e.,
\begin{equation}
|\psi_{I_r,AF}(t)\rangle=e^{-iH_{I_r}t}|\psi_{AF}(0)\rangle.
\label{nu}
\end{equation}
We then suppose that the probability amplitude of the atom in the excited state is of exponential decay form, namely, we shall assume that the long time behavior of the survival amplitude is just an exponential decay. As will be seen in the following in order for this assumption to hold we must apply an approximation which, in turn, leads to a solvable model reproducing the standard second order approximation to the decay rate $\gamma$ and the Lamb Shift $\delta\omega$. In fact, within the approximation regime, for long times,  the probability amplitude of the atom in the excited state decays exponentially.

Applying the approximation the contribution of all terms, which appear after normal ordering of the creation and annihilation operators of the field in Dyson series, is pushed to the re-normalization and the re-normalized evolution deals only with normally ordered terms of creation and annihilation operators of the field, being therefore in principle exactly solvable by means of coherent states basis.
\begin{figure}[h]
\centering
\includegraphics[width=7cm]{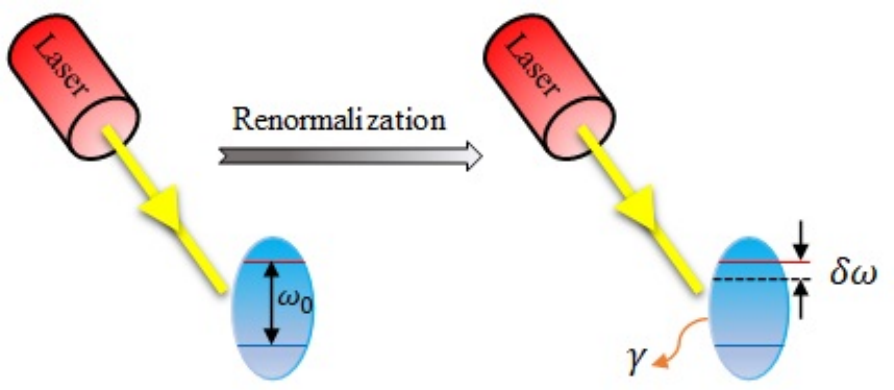}
\caption{Interaction of a two-level atom with a quantized continuous-mode laser field. After re-normalization the re-normalized free Hamiltonian of the atom reads $H_{0_r}=(\omega_0+\delta\omega-i\gamma)|1\rangle\langle 1|$ where $\omega_0$ is the atomic frequency, $\gamma$ the decay rate due to spontaneous emission, $\delta\omega$ the Lamb Shift and $|1\rangle$ the excited state of the atom. As is seen the free Hamiltonian is non-Hermitian. The re-normalized interaction Hamiltonian becomes $H_{I_r}=H_{I}-(\delta\omega-i\gamma)|1\rangle\langle 1|$ which is non-Hermitian.}
\label{Fig0}
\end{figure}
\section{Dissipative re-normalization}\label{renormalization}\label{III}
In this section we first introduce the general form of the Hamiltonian of the interaction of a two-level atom with a continuous-mode field. Then by adding and subtracting two terms from the Hamiltonian we will re-normalize the Hamiltonian and using this re-normalized Hamiltonian we will formulate the \textit{non-unitary} evolution of the atom-field in the interaction picture. For an atom interacting with a continuous-mode field the Hamiltonian, in the rotating wave approximation, reads ($\hbar=1$)
\begin{equation}\label{H0F}
H = H_0 + H_{I},
\end{equation}
where $H_0 =H_A + H_F$ with
\begin{align}\label{H0&HA}
H_A \equiv \omega_0|1\rangle\langle 1|, \ \qquad
H_F \equiv \sum_{\lambda}\int d^3\vec{k} \omega_{\vec{k}\lambda}a^{\dagger}_{\vec{k}\lambda}a_{\vec{k}\lambda},\numberthis
\end{align}
$H_A$ and $H_F$ being the Hamiltonians of the atom and the field respectively and
\begin{equation}
 H_{I} = \sum_{\lambda}\int d^3\vec{k}\, [f(\omega_{\vec{k}\lambda})\sigma^+  a_{\vec{k}\lambda} + f^*(\omega_{\vec{k}\lambda})  \sigma^- a^{\dagger}_{\vec{k}\lambda}],
\label{Hint}
\end{equation}
where the modes are labelled by a continuous wave-vector $\vec{k}$ and a polarization label $\lambda=\pm1$ and $a_{\vec{k}\lambda}$ ($a^{\dagger}_{\vec{k}\lambda}$) are the field annihilation (creation) operators of mode $(\vec{k},\lambda)$, $\sigma^-$ ($\sigma^+$) the atom lowering (raising) operator and
\begin{equation}
f(\omega_{\vec{k}\lambda})=i \sqrt{\dfrac{\omega_{\vec{k}\lambda}}{2(2 \pi)^3\varepsilon_0}}\ \vec{e}_{\vec{k}\lambda}\cdot\vec{D}
\end{equation}
is the coupling constant in which $\vec{e}_{\vec{k}\lambda}$ is the electric field unit vector and $\vec{D}$ the atomic dipole moment vector \cite{mandel1995optical}.  In the following, for ease of calculations, $\int d^3\vec{k}$ will be denoted by the symbol $\int dk$ and the polarization index $\lambda$ is also dropped. The Hamiltonian of the atom-field given in Eq. \eqref{H0F} can be decomposed into two non-Hermitian  re-normalized parts
\begin{equation}
H = H_{0_r} + H_{I_r} ,
\label{HrF}
\end{equation}
where $H_{0_r} = H_{A_r} + H_F$ with
\begin{align}\label{H0r}
H_{A_r} \equiv\Omega |1\rangle\langle 1|, \ \qquad 
H_F \equiv \int d k\, \omega_k a^{\dagger}_ka_k,\numberthis
\end{align}
and 
\begin{equation}\label{Hamr1}
 H_{I_r} = \int d k\, [f(\omega_k)\sigma^+  a_k + f^*(\omega_k)  \sigma^- a^{\dagger}_k]+ib|1\rangle\langle 1|,
\end{equation}
with
\begin{equation}\label{re-constants}
b=i\delta\omega+\gamma,\ \quad \Omega=\omega_{A}-i\gamma,\ \quad \omega_{A}=\omega_{0}+\delta\omega, 
\end{equation}
where $\gamma$ is the decay rate and $\delta \omega$ the Lamb shift (see Fig. \ref{Fig0} for more detail). The interaction Hamiltonian in the interaction picture reads
\begin{eqnarray}
\label{L10}\nonumber
\tilde{H}_{I_r}(t)&=&e^{iH_{0_r}t}H_{I_r}e^{-iH_{0_r}t}\\ \nonumber
&=&\int dk\ [f(\omega_k,t)\sigma^+a_k+f'(\omega_k,t)\sigma^-a^\dagger_k]+ib|1\rangle\langle1|,\\
\end{eqnarray}
where
\begin{equation}\label{f}
f(\omega,t)=f(\omega)e^{-i(\omega-\omega_A+i\gamma)t}, f'(\omega,t)=f^*(\omega)e^{i(\omega-\omega_A+i\gamma)t}.
\end{equation}
It is also convenient to use the notation
\begin{eqnarray}\label{AA}
A(t)\equiv\int dk\ f(\omega_k,t)a_{k},& \nonumber \\
A'(t)\equiv\int dk\ f'(\omega_k,t)a^\dagger_{k},&
\end{eqnarray}
so that
\begin{equation}\label{eq:HI_A}
\tilde{H}_{I_r}(t)= \sigma^+ A(t) + \sigma^- A'(t)+ib|1\rangle\langle1|.
\end{equation}
The operators $A, A'$ satisfy the commutation relations (inherited
from canonical commutation relations):
\begin{equation}\label{CL1}
[A(t),A'(s)]=F^r(t-s),
\end{equation}
where
\begin{align}\label{Frt}
F^r(t)= e^{i(\omega_A-i\gamma)t}F(t),\quad F(t)\equiv\int dk\ e^{-i\omega_kt}|f(\omega_k)|^2.
\end{align}
For the (immediate) proof, see Lemma \ref{lem:commuting_A} in Appendix. As will be seen, in the following, using this re-normalized Hamiltonian the evolution of the S-propagator of the atom-field, in the weak coupling limit, will be remarkably simplified.
\section{Approximation scheme}\label{IV}
Here we will propose an approximation scheme in which we shall assume that the long time behavior of the survival amplitude is just exponential decay. As will be seen below, this approximation is equivalent to the following approximation 
\begin{equation}\label{approximation}
F^r(t)\approx b\delta(t),\ \quad t>0.
\end{equation}
Applying this approximation allows us to remove the contribution of \textit{all} non-normally ordered terms from the evolution of the S-propagator of the atom-field. In fact, the contribution of non-normally ordered terms will be contained just in two terms - the Lamb shift, and the decay rate. We shall first motivate the approximation in a simpler model, that is exactly solvable - Friedrichs-Lee model \cite{Lee,Friedrichs,lonigroquantum}. Then in the next section the full model will be presented.
\subsection{ Friedrichs-Lee model}
We will consider Friedrichs-Lee model to justify our approximation \cite{Lee,Friedrichs,lonigroquantum}. Consider the field to be initially in the vacuum state. Therefore the evolution of the atom-field leaves invariant the sector of Hilbert space spanned by  the vectors: $|e\rangle \equiv |1\rangle\otimes|\{0\}\rangle$,
$|f\rangle \equiv \int dk|0\rangle\otimes f(\omega_k)a_k^{\dagger}|\{0\}\rangle$ where $|1\rangle$ ($|0\rangle$) is the excited (ground) state of the atom and $|\{0\}\rangle$ $\left(|\{1_k\}\rangle\right)$ the vacuum state of the field (the state with one photon in the $k$ mode with an arbitrary $f(\omega_k)$). Thus the interaction Hamiltonian, given in Eq. \eqref{Hint}, restricted to this sector reads
\begin{equation}
 H_{I} = |e\rangle\langle f|+|f\rangle\langle e|, \quad  |f\rangle = \int dk\ f(\omega_k) |0\rangle|\{1_k\}\rangle,
\label{Ham1}
\end{equation}
where $\{\omega_k\}$ denotes the complete set of frequencies that specify the states in each excited mode of the field. We will apply the approximation
\begin{equation}\label{amp-app}
\langle e| e^{-iHt} |e\rangle \simeq  e^{(-i\omega_A - \gamma)t}
\end{equation}
for longer times in the following calculations where $H$ is the Hamiltonian of the total system. The evolution in the interaction picture $U_{I}(t) = e^{iH_0 t} e^{-iH t}$ satisfies the integral equation
\begin{equation}\label{eqW}
U_{I}(t) = \mathbb{I} - i \int_0^t ds\ e^{iH_0 s} (|e\rangle\langle f| + |f\rangle\langle e|)e^{-iH_0 s}U_{I}(s).
\end{equation}
where $H$ and $H_0$ are defined in Eqs. \eqref{H0F} and \eqref{H0&HA}, respectively, (notice that the polarization index $\lambda$ has been dropped for ease of calculations). Here we define the matrix elements
\begin{eqnarray}\label{eqAB}
&K(t) = \langle e|U_{I}(t) |e\rangle ,\qquad  M(t) = \langle f_t|U_{I}(t) |e\rangle
\end{eqnarray}
and the ket 
\begin{equation}\label{ft}
|f_t\rangle=\int dk\ e^{i\omega_kt}f(\omega_k)|0\rangle|\{1_k\}\rangle.
\end{equation}
Now inserting Eq. \eqref{eqW} into Eq. \eqref{eqAB} one obtains
\begin{eqnarray}\label{solA}
&K(t) = 1 - i \int_0^t ds\ e^{i\omega_0 s} M(s) ,\ \qquad & \nonumber \\ 
&M(t) = -i \int_0^t ds\ e^{-i\omega_0 s}F(t-s) K(s),&
\end{eqnarray}
where $F(t)$ is defined in Eq. \eqref{Frt}. Using the definition of the Laplace transform 
\begin{equation}\label{Lap}
\mathcal{F}(z) = \int_0^{\infty} dt\ e^{-z t} F(t),
\end{equation}
and its properties one can readily transform Eq. \eqref{solA} into
\begin{equation}\label{solLap}
\mathcal{K}(z) = \frac{1}{z} - i \frac{1}{z}\mathcal{M}(z - i\omega_0) , \quad  \mathcal{M}(z) = -i \mathcal{F}(z)\mathcal{K}(z + i\omega_0),
\end{equation}
which finally gives
\begin{equation}\label{solfin}
\mathcal{K}(z) = \frac{1}{z + \mathcal{F}(z - i\omega_0)}.
\end{equation}
\subsection{Dissipative re-normalization scheme in Friedrichs-Lee model}
The (non-unitary) re-normalized interaction picture evolution $U_{I_r}(t) \equiv e^{iH_{0_r}t}e^{-iHt}$ reads
\begin{eqnarray}\label{eqWr}\nonumber
U_{I_r}(t) &=& 1-i\int_0^tds\ e^{iH_{0_r}s}\bigr[|e\rangle\langle f| + |f\rangle\langle e|\\
&+&ib|1\rangle\langle 1|\bigr]e^{-iH_{0_r} s}U_{I_r}(s),
\end{eqnarray}
where $H$ and $H_{0_r}$ are defined in Eqs. \eqref{HrF} and \eqref{H0r}, respectively. The relevant matrix element can be expressed in terms of re-normalized interaction picture
\begin{equation}\label{rendyn}
\langle e|e^{-iHt}|e\rangle = e^{(-i\omega_A - \gamma)t} \langle e|U_{I_r}(t)|e\rangle
\end{equation}
We introduce the notation
\begin{equation}\label{not}
K^r(t) = \langle e|U_{I_r}(t)|e\rangle, \quad  M^r(t)  = \langle f_t|U_{I_r}(t)|e\rangle,
\end{equation}
where $|f_t\rangle$ is defined as before. Inserting Eq. \eqref{eqWr} into Eq. \eqref{not} one obtains
\begin{equation}\label{solAr}
K^r(t) = 1 - i \int_0^t ds\ e^{i\Omega s} M^r(s)\, ds  + i(\delta\omega - i\gamma) \int_0^t ds\ K^r(s),
\end{equation}
\begin{equation}\label{solBr}
M^r(t) = -i \int_0^t ds\ e^{-i\Omega s} F(t-s) K^r(s).
\end{equation}
Applying the Laplace transform we get 
\begin{eqnarray}\label{solLap}\nonumber
&\mathcal{K}^r(z) = \frac{1}{z} - i \frac{1}{z}\mathcal{M}^r(z - i\Omega) + i (\delta\omega - i\gamma)\frac{1}{z}\mathcal{K}^r(z)  ,& \qquad \\ 
&\mathcal{M}^r(z)= -i \mathcal{F}(z)\mathcal{K}^r(z + i\Omega),&
\end{eqnarray}
which finally gives
\begin{equation}\label{solfin}
\mathcal{K}^r(z) = \frac{1}{z + \mathcal{F}(z- i\omega_A - \gamma) - i\delta\omega -\gamma}.
\end{equation}
Thus, for example,  the evolution of the probability amplitude that the atom stays excited is given by 
\begin{equation}\label{rendyn}
\langle e|e^{-iHt}|e\rangle =  e^{(-i\omega_A - \gamma)t}K^r(t),
\end{equation}
where $K^r(t)$ is the inverse Laplace transform of
$\mathcal{K}^r(z)$ from  Eq. \eqref{solfin}.
\subsection{Justification of the approximation}\label{subC}

Now we will propose the approximation, mentioned above in Eq. \eqref{approximation}, that will later be carried out to all orders of Dyson series in the general case. According to Eq. \eqref{amp-app} this approximation means that for long times we have $K^r(t) \simeq 1$. In order to translate it into the picture of Laplace transform, we can use the Tauberian theorem
\begin{equation}\label{taub}
\mathcal{Y}(z) \simeq z^{-n}\,\mathrm{as}\,  z\to 0 \iff  Y(t) \simeq \frac{n}{\Gamma(n+1)} t^{(n-1)}
\end{equation}
as $t \to \infty$, where $\Gamma(x)$ is the Gamma function. Putting $n=1$ we conclude that $K^r(t)\simeq 1$ for long times if and only if $\mathcal{K}^r(z)\simeq 1/z$ for small $z$. In Eq. \eqref{solfin} this is equivalent to the condition
\begin{equation}\label{Markov}
\mathcal{F}(z-i\omega_A-\gamma) \simeq \gamma + i\delta\omega 
\end{equation}
or equivalently
\begin{equation}\label{laplace-app}
  \mathcal{F}^r(z)\simeq b.  
\end{equation}
Taking the inverse Laplace transform of Eq. \eqref{Markov} our weak coupling approximation \eqref{amp-app}, in time domain, becomes
\begin{equation}\label{+approximation}
F^r(t)\approx b\delta(t),\ \quad t>0.
\end{equation}
And for negative times we have (see Appendix \ref{proof-approximation})
\begin{equation}\label{-approximation}
F^r(t)\approx b^*\delta(t),\ \quad t<0.
\end{equation}

So far this approximation was considered for long times. 
Now, we propose to allow for {\it substitution}
\begin{equation}\label{+approximationreplacement}
    F^r(t) \to b \delta(t), \quad \text{for}\quad  t>0,
\end{equation}
\begin{equation}\label{-approximationreplacement}
    F^r(t) \to b^* \delta(t) \quad \text{for}\quad  t<0,
\end{equation}
under time integrals. We expect this approximation is valid for relatively small coupling, but stronger than typical weak coupling scenario as encountered in quantum optics, allowing. The substitution works due to oscillatory behavior of $F^r(t)$, and it is close in spirit to secular approximation. However, it is much less harmful leaving room for non-Markovian effects. In Appendix \ref{DeltaPlot} we also show that in typical time integrals, used in derivation of our main result, our approximation still holds with a good accuracy for coupling weak enough. 
\blk

In the rest of the paper we shall apply the approximation to get simplified equations of motion for spin boson model.
In particular, we shall validate the resulting equations by showing that they  reproduce the well known Optical Bloch equations.

We finish this section, by showing that the obtained  values of $\gamma$ and $\delta \omega$ are consistent with 
the assumption, that for times long enough, we have Markovian evolution. 
\blk
In order to compute the values of $\gamma$ and $\delta \omega$ we substitute Eq. \eqref{Frt} into Eq. \eqref{laplace-app} and using dispersion relation $\omega=k c $ we get (with speed of light c=1) 
\begin{equation}\label{Fform}
 \mathcal{F}^r(z)  =  \int_0^{\infty} dt\ e^{-(z_r -\gamma)t} e^{i(\omega_A-\omega-z_i)t}\int d\omega\  4\pi\omega^2 |f(\omega)|^2,
\end{equation}
where $z_r\equiv \mathfrak{R}(z)$, $z_i\equiv \mathfrak{I}(z)$. As we can clearly see, the only way for this integral not to be divergent for $t\rightarrow \infty$ is that $z_r -\gamma > 0$. Considering that, we have
\begin{equation}\label{final-laplace}
 \mathcal{F}^r(z) = \int d\omega\ 4\pi\omega^2 \frac{|f(\omega)|^2}{z+i(\omega-\omega_A)-\gamma}.
\end{equation} 
For small $z$ (and consequently $\gamma$) we have
\begin{eqnarray}\label{DP}\nonumber
\dfrac{1}{z-i\omega_A-\gamma+i\omega}&=&\dfrac{z_r-\gamma+i(\omega_A-\omega-z_i)}{|z_r-\gamma|^2+|\omega_A-\omega-z_i|^2}\\\nonumber
&=&\dfrac{z_r-\gamma}{|z_r-\gamma|^2+|\omega_A-\omega-z_i|^2}\\\nonumber
&+&i\dfrac{\omega_A-\omega-z_i}{|z_r-\gamma|^2+|\omega_A-\omega-z_i|^2} \\
&\approx&\pi\delta(\omega_A-\omega)+i\mathcal{P}(\dfrac{1}{\omega_A-\omega}),
\end{eqnarray}
where  $\mathcal{P}$ is the Cauchy principal value. Now substituting Eqs. \eqref{laplace-app} and \eqref{DP} into Eq. \eqref{final-laplace} one gets
\begin{equation}\label{DP2}
\gamma +  i\delta\omega + O(z)\approx4\pi^2\omega^2_A|f(\omega_A)|^2+i\mathcal{P}\int d\omega\ \frac{4\pi\omega^2}{\omega_A-\omega} |f(\omega)|^2,
\end{equation}
which gives
\begin{equation}\label{DP3}
\gamma\approx4\pi^2\omega^2_A|f(\omega_A)|^2,\ \quad \delta\omega\approx\mathcal{P}\int d\omega\ \frac{4\pi\omega^2}{\omega_A-\omega} |f(\omega)|^2.
\end{equation}

We have just obtained standard Markovian decay rate, as it should be in \eqref{amp-app}.
\section{ Evolution coming from dissipative re-normalization and approximation}\label{V}

\subsection{Re-normalized S-propagator}\label{S-propagator}

The re-normalization method introduced in Sec. \ref{renormalization} enables us to greatly simplify the formula for the the S-propagator evolution of the atom-field. As will be seen below after the re-normalization the evolution of the S-matrix, in our approximation regime, can be surprisingly fully determined by the normal ordered terms, the Lamb Shift and the decay rate. In fact, the Lamb Shift and the decay rate account for the contribution of all non-normal ordered terms in Dyson series. Using Eqs. \eqref{H0F}-\eqref{Frt}  the time-evolution operator, in the interaction picture with respect to $H_{0_r}$, takes the form:
\begin{eqnarray}\label{Uint}\nonumber
U_{I_r}(t,0)&=&\mathds{1}-i\int^{t}_{0}dt_1\ \tilde{H}_{I_r}(t_1)\\\nonumber
&+&(-i)^2\int^{t}_{0}dt_1\int^{t_1}_{0}dt_2\tilde{H}_{I_r}(t_1)\tilde{H}_{I_r}(t_2)+\dots\\
\end{eqnarray}
Applying now the approximation given in Eqs. \eqref{+approximation} and \eqref{-approximation} the evolution of S-propagator elements, i.e. $S_{I_r,ij}(t,0)\equiv\langle i|U_{I_r}(t,0)|j\rangle$, will take the simple form given by our main theorem.
\begin{thm}\label{thm1}
The evolution of S-propagator elements $S_{I_r,ij}(t,0)\equiv\langle i|U_{I_r}(t,0)|j\rangle$ of the whole system
is given by:
\begin{widetext}
\begin{equation}\label{th1}
S_{I_r,11}(t,0)\simeq \mathbb{I}_F+\sum_{n=1}(-i)^{2n}\int_{0}^{t}dt_1\ldots\int_{0}^{t_{2n-1}}dt_{2n}\ 
A'(t_2)\ldots A'(t_{2n})A(t_1)\ldots A(t_{2n-1}),
\end{equation}
\begin{equation}\label{th2}
S_{I_r,00}(t,0)\simeq\mathbb{I}_F+\sum_{n=1}(-i)^{2n}\int_{0}^{t}dt_1\ldots\int_{0}^{t_{2n-1}}dt_{2n}\ 
A'(t_1)\ldots A'(t_{2n-1})A(t_2)\ldots A(t_{2n}),
\end{equation}
\begin{equation}\label{th3}
S_{I_r,01}(t,0)\simeq\sum_{n=1}(-i)^{2n-1}\int_{0}^{t}dt_1\ldots\int_{0}^{t_{2n-2}}dt_{2n-1}\ 
A'(t_1)\ldots A'(t_{2n-1})A(t_2)\ldots A(t_{2n-2}),
\end{equation}
\begin{equation}\label{th4}
S_{I_r,10}(t,0)\simeq\sum_{n=1}(-i)^{2n-1}\int_{0}^{t}dt_1\ldots\int_{0}^{t_{2n-2}}dt_{2n-1}\ 
A'(t_2)\ldots A'(t_{2n-2})A(t_1)\ldots A(t_{2n-1}),
\end{equation}
\end{widetext}
where $\mathbb{I}_F$ is the identity matrix in the field space.
\end{thm}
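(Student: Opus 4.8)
The plan is to avoid resumming the Dyson series \eqref{Uint} term by term and instead exploit the exact operator-valued equations of motion obeyed by the atomic matrix elements $S_{I_r,ij}(t,0)=\langle i|U_{I_r}(t,0)|j\rangle$, regarded as operators on the field space. Differentiating \eqref{Uint} gives $\partial_t U_{I_r}=-i\tilde H_{I_r}(t)\,U_{I_r}$; sandwiching between atomic states and reading off from \eqref{eq:HI_A} the blocks $\langle1|\tilde H_{I_r}|0\rangle=A(t)$, $\langle0|\tilde H_{I_r}|1\rangle=A'(t)$, $\langle1|\tilde H_{I_r}|1\rangle=ib\,\mathbb{I}_F$ and $\langle0|\tilde H_{I_r}|0\rangle=0$, I obtain for each fixed $j$ the exact closed pair $\dot S_{1j}=-iA(t)S_{0j}+b\,S_{1j}$ and $\dot S_{0j}=-iA'(t)S_{1j}$, with $S_{ij}(0)=\delta_{ij}\mathbb{I}_F$. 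No approximation has been used yet; it enters only when $A$ and $A'$ are disentangled.

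Next I would eliminate one component. For $j=1$, integrating the second equation gives $S_{01}(t)=-i\int_0^t A'(s)S_{11}(s)\,ds$, and substituting into the first yields the single integro-differential equation $\dot S_{11}=-A(t)\int_0^t A'(s)S_{11}(s)\,ds+b\,S_{11}$. Since this is a linear Volterra-type equation with unique solution (by the usual Picard iteration), it suffices to check that the normally-ordered series \eqref{th1} satisfies it together with $S_{11}(0)=\mathbb{I}_F$. The only non-trivial operation is to normal-order $A(t)\int_0^t A'(s)S_{11}(s)\,ds$, i.e. to commute $A(t)$ to the right of every $A'$ using \eqref{CL1} and the approximation \eqref{+approximation}.

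The key step is a normal-ordering lemma: writing the $n$-th summand of \eqref{th1} as $I_n(s)$ (whose $A'$ factors all carry times strictly below $s$), I claim $A(t)\int_0^t A'(s)I_n(s)\,ds=\int_0^t A'(s)\,[A(t)I_n(s)]_{\mathrm N}\,ds+b\,I_n(t)$, where $[\,\cdot\,]_{\mathrm N}$ denotes $A(t)$ moved fully to the right. The single surviving contact term $b\,I_n(t)$ arises from contracting $A(t)$ with the outermost $A'(s)$: there $F^r(t-s)\approx b\delta(t-s)$ sits at the upper endpoint $s=t$ and is picked up with unit weight, consistent with reading \eqref{+approximation} through $\mathcal F^r(0)\approx b$ in \eqref{laplace-app}; every deeper contraction vanishes because the internal $A'$ carry times bounded strictly below $t$, so their deltas are supported at an excluded endpoint. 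Summing over $n$ (and using that the $A$'s mutually commute, so the freshly appended $A(t)$ reorders into the $A$-block with time labels matching those of $\dot I_{n+1}$) reproduces exactly $\dot S_{11}=\sum_{n\ge1}(-1)^n\dot I_n$, while the collected contact terms sum to $-b\,S_{11}$ and are cancelled precisely by the renormalization term $+b\,S_{11}$. This is the heart of the theorem: the non-Hermitian insertion $ib|1\rangle\langle1|$ exactly absorbs all contact contributions generated by normal ordering, leaving only normally-ordered operators. The element $S_{01}$ then follows from one further contraction-free integration of $S_{11}$, and $S_{00},S_{10}$ are verified identically from the $j=0$ pair.

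The main obstacle is precisely this normal-ordering lemma, and within it the two features that make the cancellation exact: assigning unit weight to the boundary delta (which I would justify via the Laplace-transform reading \eqref{laplace-app} rather than treating \eqref{+approximation} as a literal symmetric delta), and the vanishing of all non-leading contractions. A useful consistency check on the mechanism is the combinatorial bookkeeping behind it: a contracted adjacent pair $(-i)^2A(t_i)A'(t_{i+1})\to-b\,\delta$ carries exactly the opposite weight of an inserted diagonal factor $(-i)(ib)=+b$, so contractions and $ib$-insertions cancel in pairs and only the fully normally-ordered, insertion-free skeleton survives, which is \eqref{th1}--\eqref{th4}. Convergence of the interchanged sums and integrals I would treat at the formal weak-coupling level.
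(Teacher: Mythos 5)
Your proposal is correct at the paper's level of rigor, but it takes a genuinely different route from the paper's own proof. The paper proves Theorem \ref{thm1} by direct combinatorial resummation of the Dyson series: Lemma \ref{lemma1} shows the atomic matrix elements select alternating $A\,A'$ strings, Proposition \ref{prop1} classifies the insertions of $ib$, Lemma \ref{lem:commuting_a} and its sublemma perform the Wick-type normal ordering with $F^r$ playing the role of the contraction, Lemma \ref{lemma4} kills the ``jump'' contractions, and the decisive step is Lemma \ref{lemma5}: an alternating binomial sum over tableaux vanishes, so every diagram except the trivial (contraction- and insertion-free) one cancels, leaving \eqref{th1}--\eqref{th4}. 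You instead derive the exact block equations of motion, reduce each column to a Volterra integro-differential equation, and verify that the normally ordered series solves it, with the approximation localized in a single normal-ordering lemma: only the boundary contraction survives, with weight $b$, and it is exactly absorbed by the counterterm $+bS$. In effect you prove the paper's differential equation \eqref{th8} directly (your cancellation amounts to $[A(t),S_{I_r,01}(t)]\simeq -ib\,S_{I_r,11}(t)$, which is precisely the operator reordering in \eqref{th8}) and then read the theorem off as its Dyson expansion --- the reverse of the paper's logical order, which obtains \eqref{th8} only as a corollary of the theorem. The two points you flag as delicate are resolved in the paper by exactly the conventions you adopt: unit weight for the endpoint delta is Lemma \ref{lemma4.1}, and the vanishing of non-adjacent contractions is Lemma \ref{lemma4}; moreover, your closing ``consistency check'' (an adjacent contraction $(-i)^2A A'\to -b\delta$ carrying the opposite weight of an insertion $(-i)(ib)=+b$, so that they cancel in pairs) is exactly the content of the paper's binomial cancellation in Lemma \ref{lemma5}. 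What your route buys is economy and structural insight --- one lemma in place of the diagram/tableaux machinery, plus a transparent explanation of why the non-Hermitian counterterm works; what the paper's route buys is explicit order-by-order bookkeeping of all the cancelling terms, which your argument hides inside the uniqueness step. The only caveat worth recording is that uniqueness for the operator-valued Volterra equation (Picard iteration with unbounded, distribution-valued kernels) is invoked formally, and that in the $j=0$ column the bookkeeping shifts slightly (the surviving contact term comes from contracting $A(t)$ with the leading $A'$ of $S_{I_r,00}$ and cancels against $+bS_{I_r,10}$), but neither issue is a gap relative to the standard of rigor of the paper's own proof.
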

Here we present a sketch for the proof (see Appendix \eqref{proof-thm1} for the whole proof).

\textit{Sketch of the proof:} Here we illustrate how the contribution from non-normally ordered terms, from different orders in Dyson series, cancel each other out. For the first order, in Dyson series with the initial exited state of the atom $|1\rangle$, the surviving term is
\begin{equation}\label{246}
\langle 1|\tilde{H}_{I_r}^1|1 \rangle\longrightarrow ib,
\end{equation}
where $\tilde{H}_{I_r}$ was defined in Eq. (\ref{eq:HI_A}) and the superscript 1 indicates the order in Dyson series. Now integrating over time we get
\begin{equation}\label{247}
(-i)\langle1|\int_{0}^{t}dt_1\tilde{H}^1_{I_r}|1\rangle=\cbl{\underbrace{bt}_{(I)}}\blk.
\end{equation}
For the second order, the surviving terms are
\begin{equation}\label{248}
\langle1|\tilde{H}_{I_r}^1\tilde{H}_{I_r}^2|1\rangle\longrightarrow a_1a^\dagger_2+(ib)^2=a^\dagger_2a_1+\delta_{12}+(ib)^2.
\end{equation}
Now integrating over time we get
\begin{align}\label{249}\nonumber
&\langle1|\int_{0}^{t}dt_1\int_{0}^{t_1}dt_2\tilde{H}_{I_r}^1\tilde{H}_{I_r}^2|1\rangle = \int_{0}^{t}dt_1\int_{0}^{t_1}dt_2\Big\{A'(t_2)A(t_1)\\ 
&+ F^r(t_1-t_2)+(ib)^2\Big\},
\end{align}
where $F(t_1-t_2)$ comes from the Dirac delta $\delta_{12}:=\delta(\omega_1-\omega_2)$ on the right hand of Eq. (\ref{248}). Now exploiting Eq. \eqref{+approximation} and the discussion in Subsection \ref{subC} we have the following approximation under the time integral for all times
\begin{equation}\label{251}
F^r(t_i-t_j)\approx b\delta(t_i-t_j).
\end{equation}
Therefore, we have the following substitution
\begin{equation}\label{251a}
\delta(\omega_i-\omega_j)\ \quad \Rightarrow\ \quad F^r(t_i-t_j)\ \quad \Rightarrow\ \quad b\delta(t_i-t_j).
\end{equation}
It should be noted that Eq. \eqref{251a} means that integrating $F^r(t_i-t_j)$ over time should give the following equality
\begin{equation}
    \int_0^t dt_i \int_0^{t_i} dt_j F^r(t_i-t_j) \approx \int_0^t dt_i \int_0^{t_i} dt_j\ b\delta(t_i-t_j) = bt.
\end{equation}
In Appendix \ref{DeltaPlot} we actually show that for an Ohmic distribution of $f(\omega)$ with cutoff on the frequency one can safely approximate the integral by $bt$. \blk Then Eq. (\ref{249}) reads
\begin{widetext}
\begin{eqnarray}\label{252}\nonumber
(-i)^2\langle1|\int_{0}^{t}dt_1\int_{0}^{t_1}dt_2\tilde{H}_{I_r}^1\tilde{H}_{I_r}^2|1\rangle&=& (-i)^2
\int_{0}^{t}dt_1\int_{0}^{t_1}dt_2\ \{A'(t_2)A(t_1)+b\delta(t_1-t_2)+(ib)^2\}\\
&=& (-i)^2\int_{0}^{t}dt_1\int_{0}^{t_1}dt_2\ A'(t_2)A(t_1)
- \cbl{\underbrace{bt}_{(I)}}\blk + \cred{\underbrace{b^2\int_{0}^{t}dt_1\int_{0}^{t_1}dt_2}_{(II)}}\blk.
\end{eqnarray}
\end{widetext}
For the third order, the surviving terms are
\begin{align}\label{253}\nonumber
\langle1|\tilde{H}_{I_r}^1\tilde{H}_{I_r}^2\tilde{H}_{I_r}^3|1\rangle &\longrightarrow ib(a_1a^\dagger_2+a_2a^\dagger_3)+(ib)^3 \\ 
&=ib(a^\dagger_2a_1+a^\dagger_3a_2+\delta_{12}+\delta_{23})+(ib)^3.
\end{align}
The contribution of $a_1a^\dagger_3$ is zero. Generally the terms $(i\delta\omega+\gamma)a_ia^\dagger_j$ with $j>i+1$ (like $a_1a^\dagger_3$) vanish. For example
\begin{widetext}
\begin{equation}\label{209b}
\sigma^+a_1(i\delta\omega+\gamma)|1\rangle\langle1|\sigma^-a^\dagger_3=(i\delta\omega+\gamma)a_1\underbrace{|1\rangle\langle 0|}_{\sigma^+}\underbrace{|1\rangle|v\rangle\langle v|\langle 1|}_{|1\rangle\langle1|\otimes|v\rangle\langle v|}\underbrace{|0\rangle\langle 1|}_{\sigma^-}a^\dagger_3=0,
\end{equation}
where $|v\rangle$ is the vacuum state of the field. Now integrating over time we get
\begin{eqnarray}\label{254}\nonumber
(-i)^3\langle1|\int_{0}^{t}dt_1\int_{0}^{t_1}dt_2\int_{0}^{t_2}dt_3\ \tilde{H}_{I_r}^1\tilde{H}_{I_r}^2\tilde{H}_{I_r}^3|1\rangle&=&(-i)^3
\int_{0}^{t}dt_1\int_{0}^{t_1}dt_2\int_{0}^{t_2}dt_3\Big\{ib[A'(t_2)A(t_1)+A'(t_3)A(t_2)\\
&+&F^r(t_1-t_2)+F^r(t_2-t_3)]+(ib)^3\Big\}.
\end{eqnarray}
Therefore, using Eq. (\ref{251}) we get
\begin{eqnarray}\label{255}\nonumber
(-i)^3\langle 1|\int_{0}^{t}dt_1\int_{0}^{t_1}dt_2\int_{0}^{t_2}dt_3\ \tilde{H}_{I_r}^1\tilde{H}_{I_r}^2\tilde{H}_{I_r}^3|1\rangle&=&(-i)^3
\int_{0}^{t}dt_1\int_{0}^{t_1}dt_2\int_{0}^{t_2}dt_3\ \Big\{ib[A'(t_2)A(t_1)+A'(t_3)A(t_2)\\\nonumber
&+&b\delta(t_1-t_2)+b\delta(t_2-t_3)]+(ib)^3\Big\}\\\nonumber
&=&\Big\{-\cbr{\underbrace{b
\int_{0}^{t}dt_1\int_{0}^{t_1}dt_2\int_{0}^{t_2}dt_3\ [A'(t_2)A(t_1)+A'(t_3)A(t_2)]}_{(III)}}\blk\\
&-&\cred{\underbrace{2b^2\int_{0}^{t}dt_1\int_{0}^{t_1}dt_2}_{(II)}}\blk
+\cgr{\underbrace{b^3\int_{0}^{t}dt_1\int_{0}^{t_1}dt_2\int_{0}^{t_2}dt_3}_{(IV)}}\blk\Big\},
\end{eqnarray}
\end{widetext}
As wee can see, the terms with the same number (or colour) already cancel out or they will with terms from higher orders. \qedsymbol{}

In Schr\"{o}dinger picture, the evolution of S-propagator elements $S_{lm}(t,0)$ read
\begin{equation}\label{Sth}
S_{lm}(t,0)=e^{-l(i\omega_A+\gamma) t}e^{-iH_Ft}S_{I_r,lm}(t,0),\ \quad l,m=0,1
\end{equation}
where $H_F$ is the free Hamiltonian of the field given in Eq. \eqref{H0r}. Using coherent states of the form $|\{\alpha\}\rangle \equiv|\alpha_1, \alpha_2, \ldots , \alpha_n \rangle$ and Theorem \ref{thm1} the matrix elements $S^{\beta\alpha}_{I_r,ij}(t,0)\equiv\langle\{\beta\},i|S_{I_r}(t,0)|j,\{\alpha\}\rangle$ can also be readily obtained as
\begin{widetext}
\begin{equation}\label{Nt1}
S^{\beta\alpha}_{I_r,11}(t,0)\simeq\langle\{\beta\}|\{\alpha\}\rangle\left(1+\sum_{n=1}(-i)^{2n}\int_{0}^{t}dt_1\ldots\int_{0}^{t_{2n-1}}dt_{2n}\ 
A'_\beta(t_2)A_\alpha(t_1)\ldots A'_\beta(t_{2n})A_\alpha(t_{2n-1})\right).
\end{equation}
\begin{equation}\label{Nt2}
S^{\beta\alpha}_{I_r,00}(t,0)\simeq\langle\{\beta\}|\{\alpha\}\rangle\left(1+\sum_{n=1}(-i)^{2n}\int_{0}^{t}dt_1\ldots\int_{0}^{t_{2n-1}}dt_{2n}\ 
A'_\beta(t_1)A_\alpha(t_2)\ldots A'_\beta(t_{2n-1})A_\alpha(t_{2n})\right),
\end{equation}
\begin{equation}\label{Nt3}
S^{\beta\alpha}_{I_r,01}(t,0)\simeq\langle\{\beta\}|\{\alpha\}\rangle\left(\sum_{n=1}(-i)^{2n-1}\int_{0}^{t}dt_1\ldots\int_{0}^{t_{2n-2}}dt_{2n-1}\ 
A'_\beta(t_1)A_\alpha(t_2)\ldots A'_\beta(t_{2n-1})A_\alpha(t_{2n-2})\right),
\end{equation}
\begin{equation}\label{Nt4}
S^{\beta\alpha}_{I_r,10}(t,0)\simeq\langle\{\beta\}|\{\alpha\}\rangle\left(\sum_{n=1}(-i)^{2n-1}\int_{0}^{t}dt_1\ldots\int_{0}^{t_{2n-2}}dt_{2n-1}\ 
A'_\beta(t_2)A_\alpha(t_1)\ldots A'_\beta(t_{2n-2})A_\alpha(t_{2n-1})\right),
\end{equation}
\end{widetext}
where
\begin{equation}\label{th5}
A_\alpha(t)\equiv e^{i\Omega t}f_{\alpha}(t),\ \quad A'_\beta(t)\equiv e^{-i\Omega t}f_{\beta}^*(t),
\end{equation}
with
\begin{equation}\label{FourierTransformF}
f_{\alpha}(t)=\int dk\ f(\omega_k)\alpha(\omega_k)e^{-i\omega_kt}.
\end{equation}
In Schr\"{o}dinger picture, the matrix elements $S^{\beta\alpha}_{lm}(t,0)\equiv\langle\{\beta\},l|S(t,0)|m,\{\alpha\}\rangle$, $l,m=0,1$ are obtained as
\begin{equation}\label{th6}
S^{\beta\alpha}_{lm}(t,0)=e^{-l\gamma t}e^{-i(l\omega_A+\Theta)t}\langle\{\beta\},l|S_{I_r}(t,0)|m,\{\alpha\}\rangle,
\end{equation}
where $\Theta=\int dk\ \omega_k\beta^*(\omega_k)\alpha(\omega_k)$.

\subsection{S-propagator differential equation}
The (non-unitary) re-normalized interaction picture evolution operator $U_{I}^r(t)$ satisfies the following differential equation
\begin{equation}\label{th7}
\frac{d}{dt}U_{I_r}(t) = -i \tilde{H}_{I_r}(t)U_{I_r}(t).
\end{equation}
The result obtained above is the derivation of the useful approximation for the map $U_{I_r}(t)$ given by a much simpler map $S_{I}(t)$ which can be treated as a certain weak coupling approximation leading to normal order expression in terms of  field operators. This new approximate dynamics satisfies the following differential equation (notice the unusual ordering of the operators!)
\begin{equation}\label{th8}
\frac{d}{dt}S_{I_r}(t) =  -i \left(\sigma^+ S_{I_r}(t)A(t) +\sigma^- A'(t) S_{I_r}(t) \right)  ,\quad S_{I}(0) = 1.
\end{equation}
The explicit form of $S_{I}(t)$ can be obtained using the Dyson series expansion for Eq. \eqref{th8} and the properties of $\sigma^{\pm}$. It corresponds to Eqs. \eqref{th1}-\eqref{th4}. In Schr\"{o}dinger picture we have
\begin{align*}\label{Sd1}
\frac{d}{dt}S(t)&=\frac{d}{dt}(e^{-iH_{0_r}t}S_{I_r})\\\nonumber
&=-iH_{0_r}e^{-iH_{0_r}t}S_{I_r}+e^{-iH_{0_r}t}\frac{d}{dt}S_{I_r}(t)\\\nonumber
&=-iH_{0_r}S(t)-ie^{-iH_{0_r}t}\Big(\sigma^+ e^{iH_{0_r}t}S(t)A(t) \\\nonumber
&+\sigma^- A'(t) e^{iH_{0_r}t}S(t)\Big)\\\nonumber
&=-iH_{0_r}S(t)-i\Big(\sigma^+S(t)\int dk\ f(\omega_k)e^{-i\omega_kt}a_{\omega_k} \\\nonumber
&+\sigma^- \int dk\ f^*(\omega_k)a^\dagger_{\omega_k}S(t)\Big).\numberthis
\end{align*}
A useful representation of $S_{I_r}(t)$ can be written in terms of the partial matrix element
\begin{equation}
S_{I_r}^{\beta\alpha} (t)  \equiv \langle\{\beta\}| S_{I_r}(t) |\{\alpha\}\rangle
\label{Scoh}
\end{equation}
with respect to the coherent states $|\{\alpha\}\rangle ,|\{\beta\}\rangle$ defined as before. Each of them satisfies the following evolution equation for the $2\times 2$ matrix  $S_I^{\beta\alpha} (t)  =  [S^{\beta\alpha}_{I,ij} (t)] ,\,  i,j = 0,1$;
\begin{equation}\label{eqScoh}
\frac{d}{dt}S_{I_r}^{\beta\alpha}(t) = -i\Bigl(A_{\alpha}(t)\sigma^+ + A'_{\beta}(t)\sigma^-\Bigr)S_{I_r}^{\beta\alpha} (t),
\end{equation}
and the initial value $S_{I_r}^{\beta\alpha}(0) = \langle\{\beta\}|\{\alpha\}\rangle$.

\subsection{Particular examples}

1) The simplest object is the survival amplitude of the atomic excited state in the vacuum field defined as $\langle\{0\},1|e^{-iHt}|1,\{0\}\rangle$. Then
\begin{eqnarray}\label{WWapprox}\nonumber
\langle \{0\},1| e^{-iHt} |1,\{0\}\rangle &=&  e^{(-i\omega_A - \gamma)t} \langle\{0\},1|U_r(t)|1,\{0\}\rangle\\ \nonumber
&=& e^{(-i\omega_A - \gamma)t} S^{00}_{11}\\
&\simeq& e^{(-i\omega_A - \gamma)t},
\end{eqnarray}
when the last line comes from our Theorem \ref{thm1} in which we assumed that our approximation (Eq. \eqref{approximation}) holds for long time. Eq. \eqref{WWapprox} reproduces the Wigner-Weisskopff result.

2) Under the evolution for initial state $|1,\{0\}\rangle$ the norm of the state at time $t$ is approximately preserved (see Appendix \ref{PE62})
\begin{equation}\label{Nm62}
\langle\psi(t)|\psi(t)\rangle=\langle \{0\},1|S^\dagger(t,0)S(t,0)|1,\{0\}\rangle\approx1.
\end{equation}
3) Taking  $\beta_k = \alpha_k$  where $\alpha$ represent the initial state of the field we can compute the final state of the atom  $|\psi(t)\rangle$  in the improved semi-classical  approximation which assumes that the total state remains a product state of the atom and the freely evolving field $|\alpha_t\rangle$
\begin{align*}\label{semclass1}
|\psi(t)\rangle &=  e^{t (-i\omega_A - \gamma)|1\rangle\langle 1|} S_{I_r}^{\alpha\alpha}(t) |\psi(0)\rangle\\\nonumber
&=U_\alpha(t)|\psi(0)\rangle,\numberthis
\end{align*}
where
\begin{equation}\label{semclass2}
U_\alpha(t) =  e^{t (-i\omega_A - \gamma)|1\rangle\langle 1|} S_{I_r}^{\alpha\alpha}(t).
\end{equation}
The differential equations corresponding to Eq. \eqref{semclass1} take the form
\begin{equation}\label{diffsemclass1}
\frac{d}{dt}|\psi(t)\rangle =  -i\bigl[(\omega_A -i \gamma)|1\rangle\langle 1| +  f_{\alpha}(t)\sigma^+     +   f^*_{\alpha}(t)   \sigma^- \bigr] |\psi(t)\rangle,
\end{equation}
and 
\begin{equation}\label{diffsemclass2}
\frac{d}{dt}U_\alpha(t)=  -i\bigl[(\omega_A -i \gamma)|1\rangle\langle 1| +  f_{\alpha}(t)\sigma^+     +   f^*_{\alpha}(t)   \sigma^- \bigr] U_\alpha(t).
\end{equation}
4) The most general situation is described by the initial field state written in the Glauber P-representation
\begin{equation}
\rho^{field}(0) =  \int \mathcal{D}\alpha \, P_0(\alpha) |\alpha\rangle\langle\alpha|
\label{P-rep}
\end{equation}
where we use symbolic notation for functional integral properly defined by a limit procedure. Here, the functional $P(\alpha)$ takes values in $2\times 2$ matrices.
Then we can compute the ($2\times 2$ matrix valued) Husimi Q-function for the final state , $Q_t(\beta)\equiv  \langle\{\beta\}|\rho(t)|\{\beta\}\rangle$ as 
\begin{widetext}
\begin{equation}
Q_t(\beta) =   e^{-iH^r_0 t} \left[\int \mathcal{D}\alpha \, P_0(\alpha) S_I^{\beta\alpha}(t) |\psi(0)\rangle\langle\psi(0)|\left(S_I^{\alpha\beta}(t)\right)^{\dagger}\right] \left(e^{-iH^r_0 t}\right)^{\dagger}.
\label{P-rep}
\end{equation}
\end{widetext}
The above expression is, in principle, computable for numerous examples of initial states.

\section{Illustration}\label{VI}
In this section we would like to apply our approximation to different settings and obtain the dynamics for the atom. First, we apply it to the case in which the initial state of the field consists of a continuous superposition of one photon in different frequencies. 
Later, we treat the evolution for the coherent state of the field and get the Optical Bloch equations, which shows that our approach reproduces 
known results. 
\begin{figure}[b]
    \centering
    \includegraphics[scale=0.40]{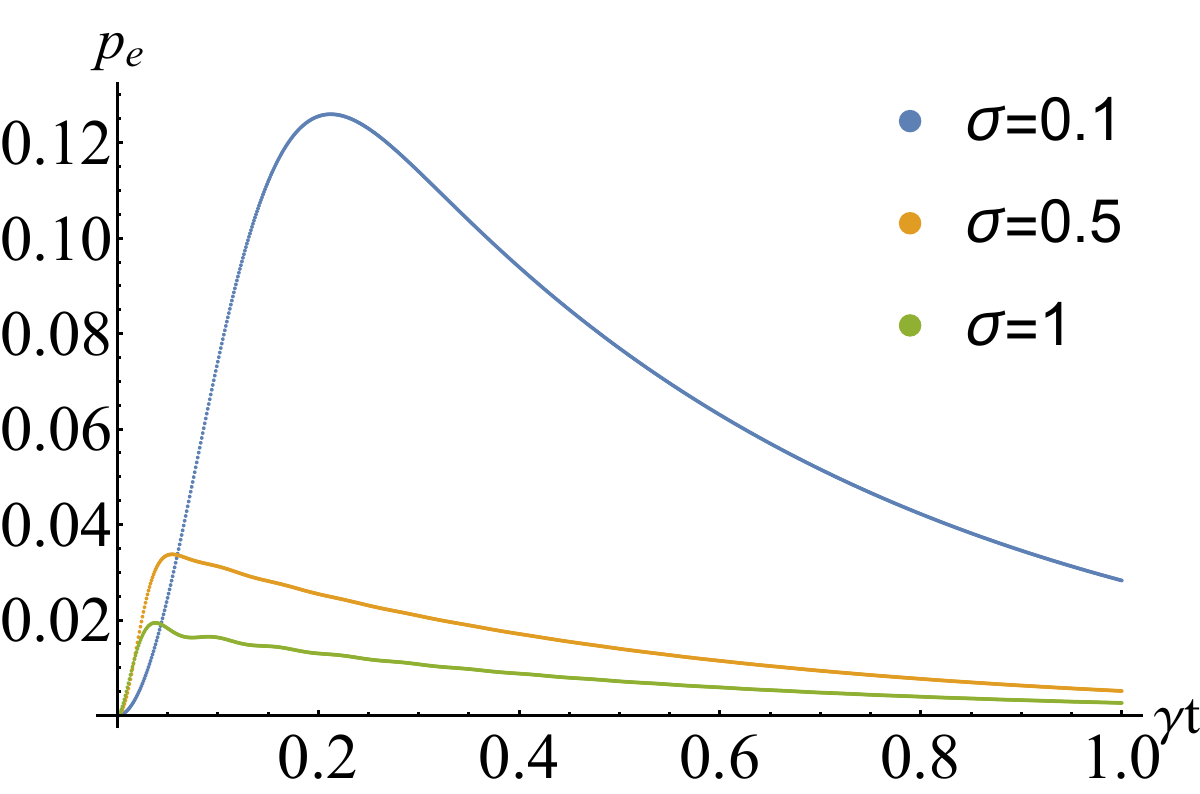}\\
    
    \includegraphics[scale=0.40]{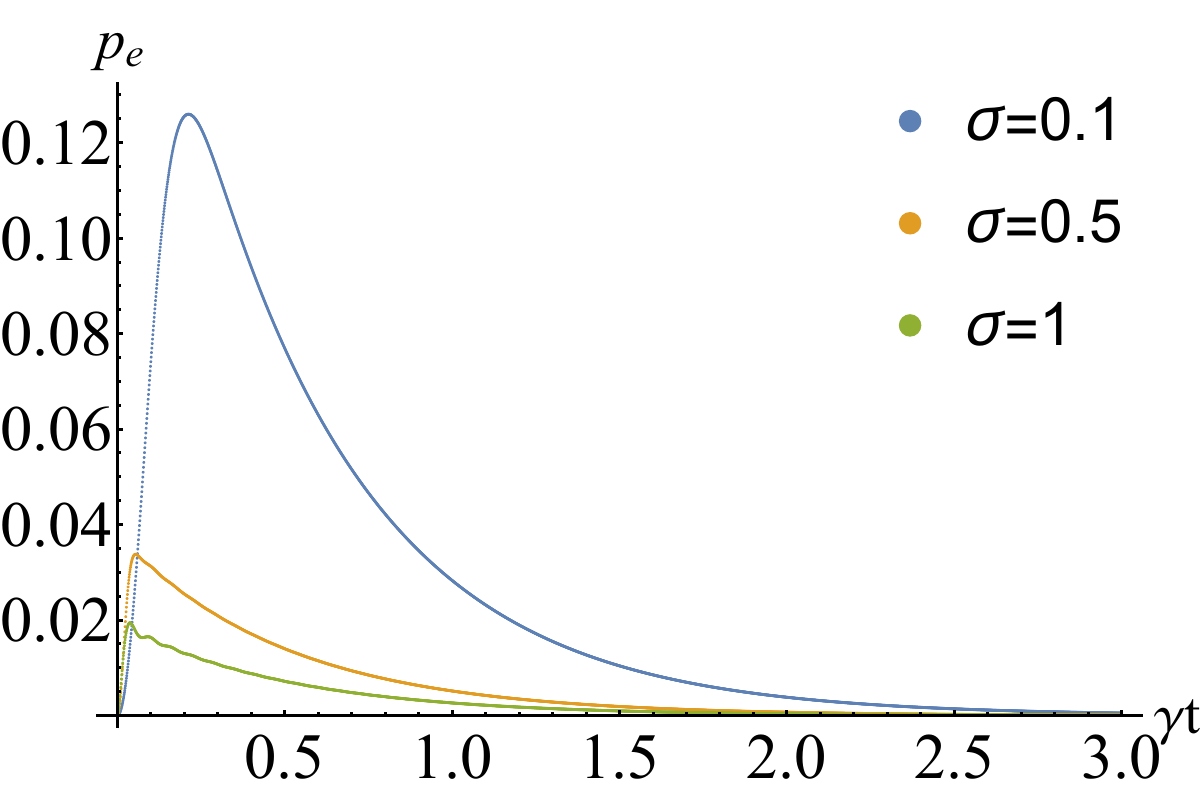}\\
    \caption{Numerical plots of the population of the excited state of the atom as a function of $\gamma t$ for $\omega_A=1$, $\gamma=0.01$, and $\kappa = \sqrt{\frac{\gamma}{\pi \omega_A}}$ for different $\sigma$ and for short (top) and long (bottom) time scales. 
    }
    \label{excited-population-short}
    \label{excited-population}
\end{figure}
\subsection{One photon }
In this subsection, we consider that the initial state of the atom and field is given by
\begin{equation}
    \rho(0)=|0\rangle \langle 0| \otimes |h\rangle \langle h|,
\end{equation}
where $|0\rangle$ corresponds to the ground state of the atom and $|h\rangle = \int dk h(\omega_k)|\{1_k\}\rangle$ and $h(\omega_k)=\frac{1}{\pi^{1/4}\sqrt{\sigma}}e^{-(\omega_k-\omega_A)^2/(2\sigma^2)}$, which is a Gaussian function. One could extend our calculations for an arbitrary state of the atom but because it is not very illuminating, we restrict ourselves to this simpler case. As our state is pure, we calculate the following (in the interaction picture with respect to $H_{0_r}$)
\begin{align*}\nonumber
    &|\psi(t)\rangle=S_{I_r}(t)|0,h \rangle
    = S_{I_r,00}(t) |0, h \rangle + S_{I_r,10}(t) |1,h\rangle\\ 
   &= |0, h \rangle - \int_0^t dt_1 \int_0^{t_1} dt_2\ A'(t_1)A(t_2) |0,h\rangle 
    - i \int_0^t dt_1 A(t_1)|1,h\rangle\\\nonumber &=|0, h\rangle \\
    &- \int_0^t dt_1 \int_0^{t_1} dt_2\ \int dk'\ \int dk\ f'(\omega_{k'},t_1)f(\omega_{k},t_2) h(\omega_k) |0,\{1_{k'}\}\rangle \\
    &- i \int_0^t dt_1 \int dk\ f(\omega_{k},t_1) h(\omega_k) |1,\{0\}\rangle.\numberthis
\end{align*}
The state of the atom at time $t$ is given by $\tilde{\rho}_A(t) = \operatorname{Tr}_F \left(|\psi(t)\rangle \langle \psi(t)|\right)$. As our dynamics is trace-preserving (in the Schr\"odinger picture), we are only interested in the population of the excited level of the atom. Then, in the Schr\"odinger picture we have
\begin{align}\nonumber
    &p_e(t)=e^{-2\gamma t}\left|\int_0^t dt_1 \int dk\ f(\omega_k,t_1)h(\omega_k)\right|^2\\
    &=e^{-2\gamma t}\frac{1}{\sqrt{\pi}\sigma}\left|\int dk\ f(\omega_k)\frac{e^{\gamma t}e^{-i(\omega_k - \omega_A)t}-1}{\gamma -i(\omega_k - \omega_A)}e^{-\left(\omega_k-\omega_A \right)^2/(2\sigma^2)}\right|^2,
\end{align}
where $f(\omega_k)=\kappa \sqrt{\omega_k}$. In Fig. \eqref{excited-population} we can see the numerical solution for $p_e(t)$. It is clear that the population decays exponentially as $t\rightarrow \infty$ as ensured by our approximation. However, by decreasing $\sigma$, one can see that more time is needed for the atom to decay. The reason is that the state of the field approaches a sharp state of one photon in one specific mode, which will be de-localized in time and the atom will be interacting with the photon at all times. Because of this interaction, the atom will absorb the photon with almost constant probability for all values of time, making it impossible for the atom to decay. 

\blk

\subsection{Coherent state}
In order to perform the calculations of this subsection, we first need to write the evolution for the propagator. Making use of Eq. \eqref{eqScoh} in Schr\"odinger picture we have
\begin{equation}\label{diffsemclass0}
U^{\beta\alpha}(t)\equiv\frac{1}{\mathcal{N}_{\beta \alpha}}e^{t(-i\omega_A - \gamma)|1\rangle\langle 1|}S_{I_r}^{\beta\alpha}(t),
\end{equation}
where $\mathcal{N}_{\beta\alpha}\equiv\langle\{\beta\}|\{\alpha\}\rangle$. The now normalized Schr\"{o}dinger picture propagator ($U^{\beta\alpha}(0)=\mathbb{I}$) satisfies the following evolution equation
\begin{align*}\label{ne}
\frac{d}{dt}U^{\beta\alpha}(t)&= -i\Big[(\omega_A - i\gamma)|1\rangle\langle 1| + f_{\alpha}(t)\sigma^+ + f^*_{\beta}(t)\sigma^-\Big]\\\nonumber
&\times U^{\beta\alpha}(t),\numberthis
\end{align*}
where $f_{\alpha}(t)$ is defined in Eq. \eqref{FourierTransformF}. It is more convenient to define $\xi=\beta-\alpha$ and change the propagator parametrization to $U^{\alpha \beta}(t) \equiv U^{\xi}_{\alpha}(t)$. In this way, the evolution is given by
\begin{align*}\label{diffsemclass1}
\frac{d}{dt}U^{\xi}_{\alpha}(t)&=-i\Big[(\omega_A -i \gamma)|1\rangle\langle 1| +  f_{\alpha}(t)\sigma^+ + f^*_{\alpha}(t)\sigma^-\\\nonumber
&+f^*_{\xi}(t)\sigma^-\Big] U^{\xi}_{\alpha}(t)\\\nonumber
&=-i\Big[H_{A_{r}}+H_\alpha(t)+ f^*_{\xi}(t)\sigma^-\Big]U^{\xi}_{\alpha}(t).\numberthis
\end{align*}
where 
\begin{align}
&H_{A_{r}}\equiv(\omega_A-i\gamma)|1\rangle\langle 1|,\label{H-ar}
\end{align}
and
\begin{equation}\label{diffsemclass4}
H_\alpha(t)\equiv f_{\alpha}(t)\sigma^+ + f^*_{\alpha}(t)\sigma^-.
\end{equation}
We now move to the (non-unitary) interaction picture with respect to the Hamiltonian $H_{A_r}+H_\alpha(t)$. Therefore, the propagator on this picture will be 
\begin{equation}
   \tilde{U}^{\xi}_{\alpha}=\mathcal{T}e^{i\int_0^t dt'(H_{A_r}+H_\alpha(t'))}\mathcal{T}e^{-i\int_0^t dt'(H_{A_r}+H_\alpha(t')+f^*_{\xi}(t')\sigma^-)}.
\end{equation}
Hence we can write (see Appendix \ref{proof-sigma} for the complete derivation)
\begin{align}\label{evolution-propagator}
\frac{d}{dt}\tilde{U}^{\xi}_{\alpha}(t)&=-i f^*_{\xi}(t)\tilde{\sigma}^-(t)\tilde{U}^{\xi}_{\alpha}(t),
\end{align}
in which
\begin{align*}\label{diffsemclass4}
\tilde{\sigma}^-(t) &= \mathcal{T}e^{i\int_0^t dt'(H_{A_r}+H_\alpha(t'))t'}\sigma^-\mathcal{T}e^{-i\int_0^t dt'(H_{A_r}+H_\alpha(t'))t'}.\numberthis
\end{align*}
Before further calculations, we make the following change $\gamma\rightarrow\frac{\gamma}{2}$ (so $b+b^*=\gamma$ now) for ease of notation. We rewrite Eq. \eqref{diffsemclass1} as
\begin{align*}\label{nv1}
\frac{d}{dt}U^{\xi}_{\alpha}(t)&=-i\Big[\omega_AP_1 +  f_{\alpha}(t)\sigma^+ + f^*_{\alpha}(t)\sigma^-\big] U^{\xi}_{\alpha}(t)\\\nonumber
&-\Big[\frac{\gamma}{2}P_1+if^*_{\xi}(t)\sigma^-\Big] U^{\xi}_{\alpha}(t),\numberthis
\end{align*}
where $P_1\equiv|1\rangle\langle1|$. Defining $U_{\alpha}(t,0)$ as
\begin{equation}\label{nv2}
     U_{\alpha}(t,0)=\mathcal{T}e^{-i\int_0^t dt'H_{\alpha}(t')},
\end{equation}
where $H_{\alpha}(t)\equiv\omega_AP_1 +  f_{\alpha}(t)\sigma^+ + f^*_{\alpha}(t)\sigma^-$, we can expand $U^{\xi}_{\alpha}(t)$ in Dyson series as 
\begin{align*}\label{nv3}
     U^{\xi}_{\alpha}(t)&= U_{\alpha}(t,0)\Big[\mathbb{I}-\int_0^t ds_1\left(\frac{\gamma}{2}\tilde{P}_1(s_1)+if^*_{\xi}(s_1)\tilde{\sigma}^-(s_1)\right)\\\nonumber
     &+ \int_0^t ds_1\int_0^{s_1} ds_2\left(\frac{\gamma}{2}\tilde{P}_1(s_1)+if^*_{\xi}(s_1)\tilde{\sigma}^-(s_1)\right)\\\nonumber
     &\times \left(\frac{\gamma}{2}\tilde{P}_1(s_2)+if^*_{\xi}(s_2)\tilde{\sigma}^-(s_2)\right)\Big]\\\nonumber
     &+ \mathcal{O}(\gamma^3),\numberthis
\end{align*}
where $\tilde{X}\equiv U_{\alpha}^{\dagger}(s_1,0) X U_{\alpha}(s_1,0)$. Then the reduced density matrix of the atom at time $t$, up to the first order in $\gamma$, reads
\begin{align*}\label{nv4}
\rho_A(t)&= \Lambda (t)\rho_A(0)\\
&\approx U_{\alpha}(t,0)\Big[\rho_A(0) - \frac{\gamma}{2}\int_0^t ds_1\Big(\tilde{P}_1(s_1)\rho_A(0)\\
&+\rho_A(0)\tilde{P}_1(s_1)\Big) + \gamma\int_0^t ds_1\tilde{\sigma}^-(s_1)\rho_A(0)\tilde{\sigma}^+(s_1)\Big]\\
&\times U^\dagger_{\alpha}(t,0),\numberthis
\end{align*}
where we have applied the the usual approximation for $F^r$ (see Appendix \eqref{2ndorder} for all the calculations). Note that for the approximation, used above, to be valid it was assumed that we have a slow driving laser field with $f_\alpha(t)$ small enough such that $\tilde{\sigma}^-(t)$ is an enough slow varying function in the time scale of $1/\gamma$. 
Therefore the map $\Lambda(t)$, up to the first order in $\gamma$, may be written in the form
\begin{align*}\label{nv5}
     \Lambda (t)(\cdot)&= \mathcal{U}_{\alpha}(t,0) \Big[\mathbb{I}-\frac{\gamma}{2}\int_0^t ds_1\{\tilde{P}_1(s_1), (\cdot)\}\\\nonumber
     & + \gamma\int_0^t ds_1\  \tilde{\sigma}^-(s_1)(\cdot)\tilde{\sigma}^+(s_1)\Big] + \mathcal{O}(\gamma^2),\numberthis
\end{align*}
where the super-operator $\mathcal{U}_{\alpha}(t,0)[\cdot]=U_{\alpha}(t,0)[\cdot]U^\dagger_{\alpha}(t,0)$. Hence, inspired by Eq. \eqref{nv5} we conjecture that the following differential equation for the evolution of $\Lambda(t)$, in the interaction picture with respect to $H_{\alpha}(t)$ (defined at the beginning of the subsection), holds
\begin{align*}\label{nv51}
     \frac{d\tilde{\Lambda} (t)(\cdot)}{dt}&\approx -\frac{\gamma}{2}\{\tilde{P}_1(t), (\cdot)\} + \gamma  \tilde{\sigma}^-(t)(\cdot)\tilde{\sigma}^+(t).\numberthis
\end{align*}
Our conjecture is indeed true -at least, up to the second order (see the proof in the Appendix \eqref{2ndorder})- and we strongly think it will be preserved for all the orders. Finally, we would like to mention that in the simpler case in which the state starts in the vacuum state ($\alpha=0$) one obtains that the evolution of the reduced state of the atom is given by the Gorini-Kossakowski-Lindblad-Sudarshan (GKLS) master equation \cite{gorini1976completely,lindblad1976generators}.\blk



\section{Conclusion}\label{VII}


In this work, we investigated the interaction of a two-level atom with a quantized continuous-mode laser field. Using a re-normalized method we managed to write the evolution of the atom-field system in a form that depends only on normally ordered creation and annihilation operators of the field, the decay rate of the atom, and its natural frequency $\omega_A$. In fact, the decay rate and the Lamb Shift account for the contribution of all remaining terms, which appear after the normal ordering of the creation and annihilation operators of the field in the Dyson series. Furthermore, we showed that our approach reproduces the previously known optical Bloch equations. 
We have studied just a two-level atom, but the need for a generalization to a d-level system may arise naturally. This would be useful for studying typical setups of V-systems or $\Lambda$-systems.

Aside from the Quantum Optics realm, we reckon it may be beneficial to translate the results presented here to the framework of Feynman diagrams to understand better the evolution of the system at all times. In addition, the re-normalization method introduced here may have some potential in dealing with cut-off terms that arise in the evolution as well as giving some flavour in dealing with divergences that appear in Open Quantum systems in general.  

Last but not least, will be the investigation of quantumness traces of gravitational field when interacting with gravitational wave detectors. It is of great interest to know what the quantumness effects of the gravitational wave are when interacting with the lengths of the arms of gravitational wave detectors. We hope that our proposed study can prove useful also in the above context.


\begin{acknowledgements}
R.R.R. acknowledges helpful discussions with Konrad Schlichtholz. We acknowledge support from the Foundation for Polish Science through IRAP project co-financed by EU within the
Smart Growth Operational Programme (contract no.2018/MAB/5).
M.H. is also supported by National Science Center, Poland, through grant OPUS (2021/41/B/ST2/03207). 
\end{acknowledgements}

\bibliographystyle{ieeetr}
\bibliography{references}
\onecolumngrid
\newpage

\appendix

\section{Proof of Eq. \eqref{-approximation}}\label{proof-approximation}

Employing Eqs. \eqref{Frt} and \eqref{+approximation} we have
\begin{equation}\nonumber
F^r(t)=e^{\gamma t}h(t)\approx b\delta(t),
\end{equation}
in which 
\begin{equation}\nonumber
h(t)\equiv\int dk\, e^{-i(\omega_k-\omega_A)t} |f(\omega_k)|^2.
\end{equation}
Then for $t<0$, denoting $u=-t$ we get
\begin{eqnarray}\nonumber
F^r(t)&=&F^r(-u)\\\nonumber
&=&e^{-\gamma u}h^*(u)\\\nonumber
&=&e^{-2\gamma u}\left(e^{\gamma u}h(u)\right)^*\\\nonumber
&\approx&e^{-2\gamma u}b^*\delta(u)\\\nonumber
&=&e^{2\gamma t}b^*\delta(-t)\\\nonumber
&=&e^{2\gamma t}b^*\delta(t)\\\nonumber
&=&b^*\delta(t),\\
\end{eqnarray}
where in the last equality we dropped the term $e^{2\gamma t}$ just because inside the time integrals this term is equal to 1 due to the presence of Dirac delta.

\section{Justification of the substitution in Eqs. \eqref{+approximationreplacement} and \eqref{-approximationreplacement}}\label{DeltaPlot}
\begin{figure}[b]
    \centering
    \includegraphics[scale=0.40]{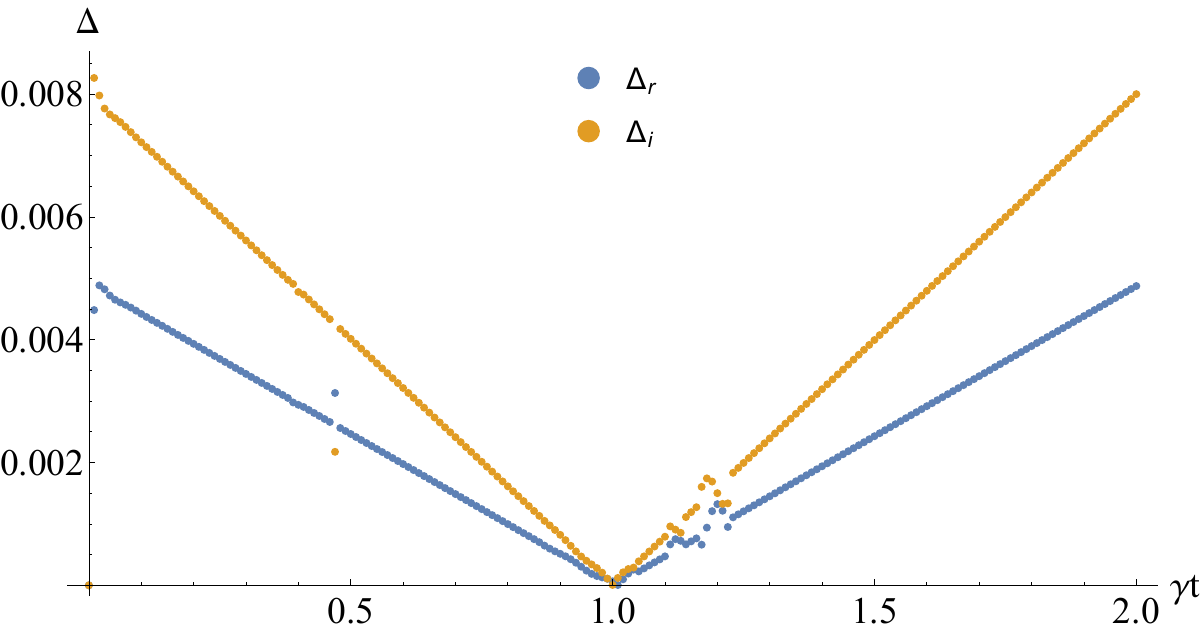}\\
    \caption{$\Delta(t)$ vs $\gamma t$ for $\gamma=0.01$, $\omega_A=1$ and $\omega_c=5$. Here $\Delta_r(t)\equiv\Re{\Delta(t)}-\gamma t$ and $\Delta_i(t)\equiv\Im{\Delta(t)}-\delta\omega t$.}
    \label{Delta}
\end{figure}
Below in Fig. \ref{Delta} we illustrate that for times up to $\gamma t = 2$, with the decay rate $\gamma=0.01$, our substitution introduced in Eqs. \eqref{+approximationreplacement} and \eqref{-approximationreplacement} is valid with an error of the order of $10^{-3}$ magnitude. Here $\Delta(t)$ is defined as
\begin{equation}
    \Delta(t) \equiv \int_0^t dt_i\int_0^{t_i} dt_j\ F^r(t_i-t_j) - bt,
\end{equation}
with $b=\gamma+i\delta\omega$ and the one-dimensional spectral density
\begin{equation}
    f(\omega)=\alpha \sqrt{\omega} e^{\frac{-|\omega|}{2\omega_C}},
\end{equation}
in which
\begin{equation}
    \alpha^2\equiv\frac{\gamma}{\pi \omega_A} e^{\frac{\omega_A}{\omega_c}}
\end{equation}
where $\omega_c$ is the cutoff frequency, $\omega_A$ the atomic frequency, and we have used the formula obtained for $\gamma$ given in Eq. \eqref{DP3}.

\section{Proof of Eq. \eqref{evolution-propagator}}\label{proof-sigma}

Assume that
\begin{equation}\label{diffsemclass12}
\frac{d}{dt}U(t) = \left(Z(t) + A(t)\right)U(t),
\end{equation}
and
\begin{equation}\label{diffsemclass13}
\frac{d}{dt}U_0(t) = Z(t)U_0(t).
\end{equation}
Using the fact that $U^{-1}_0(t)U_0(t)=\mathbb{I}$ we have
\begin{align*}\label{diffsemclass14}
\frac{d}{dt}\left(U^{-1}_0(t)U_0(t)\right) &= \frac{d}{dt}\left(U^{-1}_0(t)\right)U_0(t) + U^{-1}_0(t)\frac{d}{dt}U_0(t)\\\nonumber
&= \left (\frac{d}{dt}U^{-1}_0(t) + U^{-1}_0(t)Z(t)\right) U_0(t)\\\nonumber
&= 0,\numberthis
\end{align*}
which gives
\begin{equation}\label{diffsemclass15}
\frac{d}{dt}U^{-1}_0(t) = -U^{-1}_0(t)Z(t).
\end{equation}
Using the above equation we can write
\begin{align*}\label{diffsemclass16}
\frac{d}{dt}\tilde{U}(t) &= \frac{d}{dt}\big(U^{-1}_0(t)U(t)\big)\\\nonumber
&= \dfrac{d}{dt}U^{-1}_0(t)U(t) + U^{-1}_0(t)\frac{d}{dt}U(t)\\\nonumber
&= -U^{-1}_0(t)Z(t)U_0(t)U^{-1}_0(t)U(t)\\\nonumber
&+ U^{-1}_0(t)\big(Z(t) + A(t)\big)U_0(t)U^{-1}_0(t)U(t)\\\nonumber
&= U^{-1}_0(t)A(t)U_0(t)\tilde{U}(t)\\\nonumber
&= A'(t)\tilde{U}(t),\numberthis
\end{align*}
which completes the proof.
\section{Proof of Theorem \ref{thm1}}\label{proof-thm1}

Here the main result of the paper, i.e., Theorem \ref{thm1} will be proved. As a matter of fact, we shall only prove Eq. \eqref{th1}, and the three others are proved analogously. The proof of Eq. \eqref{th1} is given below in Proposition \ref{prop2} of this appendix. We shall need the following notation.
Let $I=\{(i_1,i_1+1),\ldots,(i_l,i_l+1)\}$,
be set of pairs of indices where
$i_1<i_2-1, i_2< i_3-1,\ldots, i_{l-1}\leq i_l-1$.
Then we will denote
\begin{align}
\label{eq:X-notation}
X_I(A A') =A(t_{i_1})A'(t_{i_1+1})\ldots
A(t_{i_{l}})A'(t_{i_l+1}),\quad
X_I(a a^\dagger) =a_{i_1}a^\dagger_{i_1+1}\ldots
a_{i_l}a^\dagger_{i_l+1} \nonumber \\
X_I(F)=
F(t_{i_1}-t_{i_1+1})\ldots
F(t_{i_{l}}-t_{i_l+1}),\quad
X_I(\delta)=
\delta(t_{i_1}-t_{i_1+1})\ldots
\delta(t_{i_{l}}-t_{i_l+1})
\end{align}
We note that normal ordering of creation annihilation operators translates into normal ordering of
$A,\tilde A$ operators, so that, in particular,  we have
\begin{equation}
:X_I\!: \,\,
=
 A'(t_{i_1+1})
A'(t_{i_1+2}) \ldots
A'(t_{i_l+1})
\,
A(t_{i_1})A(t_{i_{2}})\ldots
A(t_{i_l}),
\end{equation}
where $":x:"$ denotes normal ordering. For a set of pairs of neighboring natural numbers the following
\begin{equation}
I \sim [n]
\end{equation}
means that $I$ is set of pairs chosen from
the set $[n]\equiv\{1,\ldots,n\}$.
 (The set $I$ mey not be the set of all pairs).
For example $\{(1,2),(5,6)\}\sim [7]$.

Finally we shall denote:
\begin{equation}\label{N1}
\int dt^{n}\equiv\int_{0}^{t}dt_1\int_{0}^{t_1}dt_2\int_{0}^{t_2}dt_3\ldots\int_{0}^{t_{n-2}}dt_{n-1}\int_{0}^{t_{n-1}}dt_n.
\end{equation}

Having set the needed notation, we being with the following lemma.

\begin{lemma}\label{lemma1}
\begin{equation}\label{Lr}
\langle 1|\tilde{H}^{ 1} \tilde{H}^{ 2} \ldots \tilde{H}^{ n-1} \tilde{H}^{ n} |1\rangle=\left\{\begin{array}{cc}
{A(t_1)A'(t_2)\ldots
A(t_{n-1})A'(t_n)} & {\text {\textit{for n even,}}} \\
{0} & {\text {\textit{for n odd,}}}
\end{array}\right.
\end{equation}
\begin{equation}\label{L15}
\langle 0|\tilde{H}^{ 1} \tilde{H}^2\ldots \tilde{H}^{n-1}\tilde{H}^n|0\rangle=\left\{\begin{array}{cc}
{A'(t_1)A(t_2)\ldots
A'(t_{n-1})A(t_n)} & {\text {\textit{for n even,}}} \\
{0} & {\text {\textit{for n odd,}}}
\end{array}\right.
\end{equation}
\begin{equation}\label{L16}
\langle0|\tilde{H}^1\tilde{H}^2\ldots \tilde{H}^{n-1}\tilde{H}^n|1\rangle=\left\{\begin{array}{cc}
{A'(t_1)A(t_2)\ldots
A(t_{n-1})A'(t_n)} & {\text {\textit{for n odd,}}} \\
{0} & {\text {\textit{for n even,}}}
\end{array}\right.
\end{equation}
\begin{equation}\label{L17}
\langle1|\tilde{H}^1\tilde{H}^2\ldots \tilde{H}^{n-1}\tilde{H}^n|0\rangle=\left\{\begin{array}{cc}
{A(t_1)A'(t_2)\ldots
A'(t_{n-1})A(t_n)} & {\text {\textit{for n odd,}}} \\
{0} & {\text {\textit{for n even.}}}
\end{array}\right.
\end{equation}
\end{lemma}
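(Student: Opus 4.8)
The plan is to treat Lemma \ref{lemma1} as a purely combinatorial statement about the two-level atomic ladder operators, with the non-commuting field operators $A,A'$ merely carried along as passive labels. I would first write each factor in the form relevant to the lemma, $\tilde H^j = \sigma^+ A(t_j) + \sigma^- A'(t_j)$ (the $ib|1\rangle\langle1|$ piece of \eqref{eq:HI_A} is handled separately in the full theorem), and expand the product $\tilde H^1\cdots\tilde H^n$ into $2^n$ terms, each a string in which the $j$-th position contributes either $\sigma^+ A(t_j)$ or $\sigma^- A'(t_j)$. Since the atomic operators $\sigma^\pm$ act on a tensor factor disjoint from the field operators, they commute with every $A(t_i),A'(t_i)$, so in each term I can slide all the $\sigma$'s to the left without disturbing the left-to-right order of the field operators. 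A generic term then factorizes as $\big(\sigma^{s_1}\sigma^{s_2}\cdots\sigma^{s_n}\big)\big(B_1 B_2\cdots B_n\big)$, where $s_j\in\{+,-\}$ and $B_j\in\{A(t_j),A'(t_j)\}$ is tied to the choice of $s_j$.

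The key observation is that, sandwiched between fixed atomic states, the atomic string $\langle a|\sigma^{s_1}\cdots\sigma^{s_n}|b\rangle$ is nonzero for at most one choice of $(s_1,\ldots,s_n)$. This follows from the two-level identities $\sigma^+|1\rangle = \sigma^-|0\rangle = 0$ and $(\sigma^+)^2=(\sigma^-)^2=0$, which force the surviving string to alternate. Reading from the right (the operators hit $|b\rangle$ first), the rightmost admissible operator is uniquely fixed by $b$, and each subsequent operator is then forced, so the entire sign pattern is rigid. I would carry this out for each of the four boundary conditions: for $\langle 1|\cdots|1\rangle$, for instance, the rightmost operator must be $\sigma^-$ (since $\sigma^+|1\rangle=0$), forcing the alternating pattern $\ldots\sigma^+\sigma^-$ read left to right, which closes back on $|1\rangle$ precisely when $n$ is even and gives $0$ for $n$ odd. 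The parity condition in each of \eqref{Lr}--\eqref{L17} is exactly the requirement that this forced alternating walk on $\{0,1\}$ returns to the prescribed endpoint.

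Once the unique surviving sign pattern is fixed, the associated field operators are read off directly: the pairing $s_j=+\leftrightarrow A(t_j)$, $s_j=-\leftrightarrow A'(t_j)$ determines each $B_j$, and because the field operators were never reordered, the product appears in its original positional order $B_1 B_2\cdots B_n$. For $\langle 1|\cdots|1\rangle$ with $n$ even this yields $A(t_1)A'(t_2)\cdots A(t_{n-1})A'(t_n)$, matching \eqref{Lr}; the remaining three cases follow identically by swapping the first forced operator and tracking the parity. I would stress that this ordering is \emph{not} normal ordering: since $[A(t),A'(s)]=F^r(t-s)\neq 0$ by \eqref{CL1}, the way the $A$'s and $A'$'s interleave is meaningful and must be preserved exactly as dictated by the atomic walk.

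The step I expect to require the most care is precisely this joint bookkeeping of the field-operator order and the parity of $n$: one must confirm that commuting the atomic operators to the left is harmless (it is, since atom and field live on different tensor factors) while recognizing that the field factors may \emph{not} be commuted, so the final string inherits exactly the interleaving imposed by the alternating atomic walk. As an independent cross-check I would also supply an induction on $n$: insert the atomic resolution of identity $|0\rangle\langle0|+|1\rangle\langle1|$ directly after the leftmost factor $\tilde H^1$, use the two-level identities to annihilate all but one intermediate atomic state, and reduce to the $(n-1)$-factor statement with the complementary boundary condition.
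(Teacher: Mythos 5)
Your proof is correct and takes essentially the same route as the paper's: expand the product over sign patterns $s_j\in\{\pm\}$ while keeping the field operators in their original positional order, use the two-level identities $\sigma^+|1\rangle=\sigma^-|0\rangle=0$ and $(\sigma^\pm)^2=0$ to show that only the single alternating pattern compatible with the boundary states survives (which forces the parity of $n$), and read off the corresponding interleaved string of $A$'s and $A'$'s. Your additional remarks (that the field operators must not be commuted, and the optional induction cross-check) are consistent with, but not beyond, the paper's argument.
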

\begin{proof}
Using Eq. \eqref{eq:HI_A} where $\tilde{H}_{I_r} \equiv \tilde{H}  $ we have
\begin{equation}\label{L11}
\tilde{H}^1\tilde{H}^2\ldots \tilde{H}^{n-1}\tilde{H}^n=
\sum_{s_1,s_2,\ldots,s_n=\pm}A^{(s_1)}(t_1) \ldots A^{(s_n)}(t_n)\sigma^{s_1}\ldots\sigma^{s_n},
\end{equation}
where
\begin{equation}\label{L12}
A^{(s_j)}(t_j)=
\left\{\begin{array}{ccl}
A(t_j) & \quad \text{for } s_j=+ &\\
A'(t_j) & \quad \text{for } s_j=-&.
\end{array}\right.
\end{equation}
Since $\sigma^{+}\sigma^{+}|1\rangle=0$ and $\sigma^{-}\sigma^{-}|1\rangle=0$ hence
\begin{equation}\label{L13}
\langle1|\sigma^{s_1}\ldots\sigma^{s_n}|1\rangle=\left\{\begin{array}{cc}
{1} & {\text {\textit{if} $s_1\ldots s_n=+-\ldots+-$,}} \\
{0} & {\text {\textit{otherwise.}}}
\end{array}\right.
\end{equation}
This means, that firstly $n$ must be even in order for $\langle1|\sigma^{s_1}\ldots\sigma^{s_n}|1\rangle$ to be non-zero.
Secondly,
Eq. \eqref{L13} tells us that
for even $n$  there is only one nonzero term  in the sum \eqref{L11},
namely the term
\begin{align}
A(t_1)  A'(t_2) \ldots
A(t_{n-1})  A'(t_{n}).
\end{align}
This proves
Eq. (\ref{Lr}). Eqs. (\ref{L15})-(\ref{L17}) can also be proved in the same way.
\end{proof}

\begin{prop}\label{prop1}
\begin{equation}\label{L21}
\langle1|\tilde{H}_{I_r}^1\ldots \tilde{H}_{I_r}^n|1\rangle=\sum^{n}_{k=0}(ib)^k
\overbrace{\sum_{ p_1,\ldots, p_{(n-k)/2}}A(t_{p_1}) A'(t_{p_1+1})\ldots
A(t_{p_{(n-k)/2}})A'(t_{p_{(n-k)/2}+1})}^{\mathcal{A}_k}\equiv\sum^{n}_{k=0}(ib)^k\mathcal{A}_k,
\end{equation}
where the sum runs over $1\leq p_1<p_2-1,p_2<p_3-1,\ldots,p_{(n-k)/2-1}<p_{(n-k)/2} \leq n-1$.
See Fig. (\ref{Fig2}) for an illustration.
\end{prop}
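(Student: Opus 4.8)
The plan is to expand the ordered product $\tilde H^1\tilde H^2\cdots\tilde H^n$ directly, using Eq.~\eqref{eq:HI_A}, which writes each factor as $\tilde H^j=\sigma^+A(t_j)+\sigma^-A'(t_j)+ib\,|1\rangle\langle 1|$. Expanding yields a sum of $3^n$ terms indexed by strings $(s_1,\dots,s_n)$ with $s_j\in\{+,-,0\}$, where $s_j=+$ selects $\sigma^+A(t_j)$, $s_j=-$ selects $\sigma^-A'(t_j)$, and $s_j=0$ selects $ib\,|1\rangle\langle 1|$. Since the field operators $A,A'$ commute with the atomic operators, each term factorizes into a field part (a product of $A$'s and $A'$'s written in increasing position order), a scalar $(ib)^k$ with $k$ the number of indices $j$ for which $s_j=0$, and an atomic matrix element $\langle 1|\tau_{s_1}\cdots\tau_{s_n}|1\rangle$, where $\tau_+=\sigma^+$, $\tau_-=\sigma^-$, $\tau_0=|1\rangle\langle 1|$.

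First I would decide which strings survive by evaluating the atomic factor as a walk on $\{|0\rangle,|1\rangle\}$: applying the operators to $|1\rangle$ from the right, note that $\sigma^+$ is nonzero only on $|0\rangle$ (sending it to $|1\rangle$), while both $\sigma^-$ and $|1\rangle\langle 1|$ annihilate $|0\rangle$, and on $|1\rangle$ one may apply $\sigma^-$ (to $|0\rangle$) or $|1\rangle\langle 1|$ (staying at $|1\rangle$) but never $\sigma^+$. Tracking the walk, each occurrence of $\sigma^-$ (a $1\to0$ step) forces the operator applied immediately afterwards—i.e.\ one position to its left in written order—to be $\sigma^+$ (the only $0\to1$ step), so the $+$ and $-$ labels must occur as disjoint adjacent pairs $(s_{p_i},s_{p_i+1})=(+,-)$, with the remaining $k$ positions carrying $0$. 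Along any such surviving string the walk returns to $|1\rangle$ and the atomic factor equals exactly $1$; in particular $n-k$ is even, so $k\equiv n\pmod 2$.

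Second I would read off the field content. A pair at positions $(p_i,p_i+1)$ contributes $A(t_{p_i})A'(t_{p_i+1})$, and since the field operators appear in increasing position order the field part is $A(t_{p_1})A'(t_{p_1+1})\cdots A(t_{p_{(n-k)/2}})A'(t_{p_{(n-k)/2}+1})$. Requiring the pairs to be disjoint and ordered translates exactly into $1\le p_1$, $p_i<p_{i+1}-1$, and $p_{(n-k)/2}+1\le n$, which are precisely the summation constraints in Eq.~\eqref{L21}. Summing over $k$ and over all admissible pair positions then reproduces $\sum_{k=0}^n(ib)^k\mathcal A_k$.

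The main obstacle is the combinatorial bookkeeping in the second step: showing rigorously that the walk constraint forces precisely the \emph{adjacent} pairing $(+,-)$ at $(p_i,p_i+1)$—so that non-adjacent contributions such as $A(t_1)A'(t_3)$ never arise (cf.\ Eq.~\eqref{209b})—and that the non-overlap condition is exactly $p_i<p_{i+1}-1$ rather than merely $p_i<p_{i+1}$. Verifying that the atomic matrix element is exactly $+1$ (with no spurious sign) for every admissible string, and that no other string contributes, is the heart of the argument. An alternative that sidesteps some of this bookkeeping is induction on $n$: one peels off $\langle 1|\tilde H^1=A(t_1)\langle 0|+ib\,\langle 1|$, sets up a coupled recursion for the $\langle 1|\cdots|1\rangle$ and $\langle 0|\cdots|1\rangle$ matrix elements, and uses Lemma~\ref{lemma1} to seed the pure-pair ($k=0$) structure; however, the direct expansion above is the more transparent route.
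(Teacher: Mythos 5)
Your proposal is correct and follows essentially the same route as the paper: the paper likewise expands the Dyson product factor by factor, isolates the $(ib)^k$ contributions, and classifies the surviving atomic operator strings --- its Lemma \ref{lemma1} (nonvanishing matrix elements only for alternating $+-+-\cdots$ strings) is precisely your walk argument restricted to the two-letter alphabet, applied blockwise to the runs of $H_I$ factors between projectors in Eq.~\eqref{L22}. Your version merely flattens the paper's two-stage bookkeeping (block decomposition plus Lemma \ref{lemma1}) into a single classification of $\{+,-,0\}$ strings, a presentational rather than methodological difference.
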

\begin{figure}[h]
\centering
\includegraphics[width=10cm]{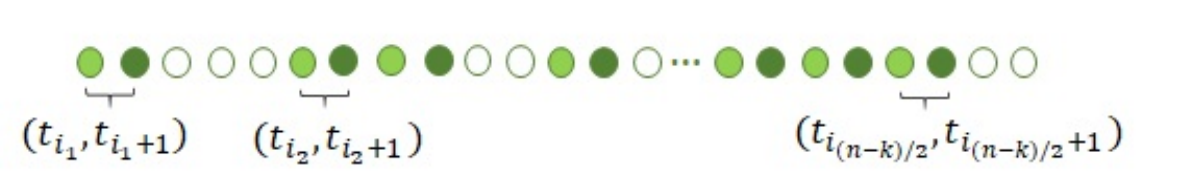}
\caption{Schematic illustration of distribution of $(ib)^k$ and the pairs $A(t_{i_j})A'(t_{i_j+1})$ for a general $n$. Each empty circle denotes one $ib$ and every two full circles denote one pair $A(t_{i_j})A'(t_{i_j+1})$. Thus the total number of the circles is $n$ and  number of the empty circles is $k$ and the number of pairs is $(n-k)/2$.}
\label{Fig2}
\end{figure}
\begin{proof}
Let us first rewrite $\tilde{H}_{I_r}$, defined in Eq. (\ref{eq:HI_A}), as
\begin{equation}\label{L20}
    \tilde{H}_{I_r}=  H_{I}+ib|1\rangle\langle1|.
\end{equation}
Then using Eq. (\ref{L20}) we have
\begin{eqnarray}\label{L22}\nonumber
\tilde{H}^1\tilde{H}^2\ldots \tilde{H}^{n-1}\tilde{H}^n&=&\sum^{n}_{j_1,j_2,j_3,\ldots j_{m}}(ib)^{(j_1+j_2+\ldots+j_{m})}\mathcal{H}_I(i_1)
|1\rangle\langle1|^{j_1}\mathcal{H}_I(i_2)
|1\rangle\langle1|^{j_2}\mathcal{H}_I(i_3)|1\rangle\langle1|^{j_3}\ldots|1\rangle\langle1|^{j_{m-1}}
\mathcal{H}_I(i_m)|1\rangle\langle1|^{j_{m}}\\\nonumber
&=&\sum^{n}_{k=0}(ib)^{k}\mathcal{H}_I(i_1)|1\rangle\langle1|
\mathcal{H}_I(i_2)|1\rangle\langle1|\mathcal{H}_I(i_3)|1\rangle\langle1|\ldots|1\rangle\langle1|
\mathcal{H}_I(i_m)|1\rangle\langle1|,\\
\end{eqnarray}
where
\begin{equation}\label{L22b}
\mathcal{H}_I(i_s)=\overbrace{H_{I}^{i_1+j_1+\ldots+i_{s-1}+j_{s-1}+s-1}\ldots H_{I}^{i_1+j_1+\ldots+i_{s-1}+j_{s-1}+i_s+s-1}}^{i_s\ times},\ \quad s=1,2,3\ldots,n\ \quad and\ \quad i_s=0,1,2,3\ldots(i_0=0),
\end{equation}
\begin{equation}\label{L22c}
|1\rangle\langle1|^{j_s}=\overbrace{|1\rangle\langle1|\ldots|1\rangle\langle1|}^{j_s\ times},\ \quad j_s=\{0,1,2,3\ldots,n\}, (j_0=0),
\end{equation}
and in the first equality $n=\sum_{s=1}^m(i_s+j_s)$ and $m$ is the number of times the pattern $\mathcal{H}_I(i_s)(ib|1\rangle\langle1|)^{j_s}$ is repeated and in the second equality $k=\sum_{s=1}^mj_s$.
Now using lemma \ref{lemma1} we have
\begin{eqnarray}\label{L22a}\nonumber
\langle1|\tilde{H}^1\ldots \tilde{H}^n|1\rangle&=&\sum^{n}_{k=0}(ib)^k
\overbrace{A(t_1)A'(t_2)\ldots A(t_{i-1})A'(t_i)}^{even}
\overbrace{A(t_{i+j+1})A'(t_{i+j+2})\ldots A(t_{l-1})A'(t_l)}^{even}\\\nonumber
&\times&\overbrace{A(t_{l+m+1})A'(t_{l+m+2})\ldots A(t_{p-1})A'(t_p)}^{even}\ldots\\
\end{eqnarray}
And since $t_i$ and $t_{i+1}$ are subsequent times in Eq. (\ref{L22a}), we can write
\begin{equation}\label{L21}
\langle1|\tilde{H}^1\ldots \tilde{H}^n|1\rangle=\sum^{n}_{k=0}(ib)^k
\sum_{i_1,\ldots,i_{(n-k)/2}}A(t_{i_1})A'(t_{i_1+1})\ldots
A(t_{i_{(n-k)/2}})A'(t_{i_{(n-k)/2}+1}),
\end{equation}
where the sum runs over $1\leq i_1<i_2-1,i_2<i_3-1,\ldots,i_{(n-k)/2-1}<i_{(n-k)/2} \leq n-1$.
\end{proof}

\begin{exmp}
As an example consider the fourth order, $n=4$, in Dyson series (see Fig. (\ref{Fig3})). We have
\begin{equation}\label{LE1}
\langle1|\tilde{H}^1\tilde{H}^2\tilde{H}^3\tilde{H}^4|1\rangle=\underbrace{A(t_{1})A'(t_{2})A(t_{3})A'(t_{4})}_{\mathcal{A}_0}
+(ib)^2\underbrace{(A(t_{1})A'(t_{2})+
A(t_{2})A'(t_{3})+A(t_{3})A'(t_{4}))}_{\mathcal{A}_2}+(ib)^4,
\end{equation}
where $\mathcal{A}_1=\mathcal{A}_3=0$ and $\mathcal{A}_4=1$.
\begin{figure}[h]
\centering
\includegraphics[width=10cm]{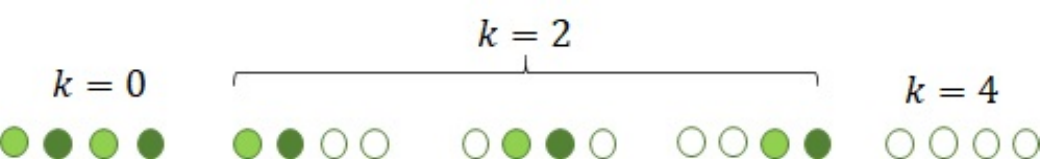}
\caption{Schematic illustration of distribution of $(ib)^k$ and the pairs $A(t_{i_j})A'(t_{i_j+1})$ for a $n=4$.}
\label{Fig3}
\end{figure}
\end{exmp}

\begin{lemma}\label{lem:commuting_a}
 For even number $p$ we have
\begin{align}
\label{LC1}
a_1a^\dagger_2\ldots a_{p-1}a^\dagger_p&=
\sum_{\substack{I,J\sim[p] \\ |I|+|J|=p/2 \\ I\cap J=\O}}
a^\dagger_{i_1+1}a^\dagger_{i_2+1}\ldots a^\dagger_{i_l+1}a_{i_1}a_{i_2}\ldots a_{i_l}
\delta(\omega_{j_1}-\omega_{j_1+1})\ldots\delta(\omega_{j_m}-\omega_{j_m+1})+\mathcal{R}(a,a^\dagger),
\nonumber \\
&\equiv
\sum_{\substack{I,J\sim[p] \\ |I|+|J|=p/2 \\ I\cap J=\O}}
:\! X_I(a,a^\dagger)\!:\, X_J(\delta) +\mathcal{R}(a,a^\dagger)
\end{align}
where $a_i\equiv a(\omega_i)$,
$I=\{(i_1,i_1+1),\ldots (i_l,i_l+1)\}$,  $J=\{(j_1,j_1+1),\ldots (j_m,j_m+1)\}$
and $l=|I|$, $m=|J|$  are the number of
pairs in $I$ and $J$, respectively. In other words we divide the set of pairs
$(1,2),(3,4),\ldots, (p-1,p)$
into two disjoint subsets $I$ and $J$, and sum up over all such possible divisions.
Moreover, $\mathcal{R}(a,a^\dagger)$ is
sum of terms of the form
\begin{equation}\label{LC2}
a^\dagger_{i'_1}a^\dagger_{i'_2}\ldots a^\dagger_{i'_l}a_{i''_1}a_{i''_2}\ldots a_{i''_{l}}
\delta(\omega_{j'_1}-\omega_{j''_1})\ldots\delta(\omega_{j'_m}-\omega_{j''_m}),
\end{equation}
where all indices $i'$, $i''$, $j'$, $j ''$ are distinct from  each other, and at least one of $\delta$'s exhibit jump, i.e. $|j'_s -j''_s|>1$ for some $s$.
\end{lemma}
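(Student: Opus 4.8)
The plan is to recognize Eq. \eqref{LC1} as an instance of Wick's theorem for bosonic operators applied to the alternating string $a_1a_2^\dagger\cdots a_{p-1}a_p^\dagger$, and then to reorganize the resulting contractions according to whether they respect the neighboring-pair structure $(1,2),(3,4),\ldots,(p-1,p)$ or not. The key input is the single commutation relation $[a_i,a_j^\dagger]=\delta(\omega_i-\omega_j)$ already recorded in the paper; everything else is combinatorial bookkeeping.

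First I would establish the normal-ordering expansion by repeatedly applying $a_ia_j^\dagger=a_j^\dagger a_i+\delta(\omega_i-\omega_j)$ to push creation operators to the left (equivalently, invoke Wick's theorem directly). The outcome is that the product equals a sum over all admissible contraction sets: a single contraction joins an annihilation operator $a_i$ to a creation operator $a_j^\dagger$ lying to its right and contributes the scalar factor $\delta(\omega_i-\omega_j)$, while the uncontracted operators are left normally ordered. Because odd positions carry annihilation operators and even positions creation operators, every admissible contraction necessarily links some $a_{2r-1}$ to some $a_{2s}^\dagger$ with $r\le s$ (a contraction with $r>s$ would put the creation operator to the left and so vanish).

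Next I would split the sum over contraction sets into two families. The \emph{main term} collects exactly the contraction sets that use only \emph{diagonal} contractions $r=s$, i.e.\ contractions internal to a single neighboring pair $(2s-1,2s)$. Specifying such a set is the same as choosing the subset $J$ of pairs that are contracted, each contributing $\delta(\omega_{2s-1}-\omega_{2s})$ and hence the factor $X_J(\delta)$; the complementary subset $I$ is left uncontracted, and normal-ordering its surviving operators produces precisely $:\!X_I(a,a^\dagger)\!:$ as defined, with $I,J$ partitioning the $p/2$ pairs so that $|I|+|J|=p/2$ and $I\cap J=\emptyset$. Summing over all such subsets reproduces the explicit sum in Eq. \eqref{LC1}. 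Every remaining contraction set, namely each one containing at least one \emph{off-diagonal} contraction $r<s$, is placed in $\mathcal{R}(a,a^\dagger)$; I would then verify that each such term has the form of Eq. \eqref{LC2}, with all surviving and contracted indices distinct and at least one delta exhibiting a jump $|j'_s-j''_s|>1$.

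The step that requires genuine care, and which I regard as the main obstacle, is making the contraction expansion rigorous while tracking the admissibility condition. A fully self-contained route is induction on $p$, but this is delicate because contracting a creation operator out of the string destroys the strict alternation of $a$ and $a^\dagger$, producing adjacent annihilation operators and thereby complicating any inductive hypothesis phrased in terms of the pair structure. I would circumvent this by running the induction at the level of the structurally stable statement \textit{``product $=$ sum over admissible contraction sets,''} which makes no reference to alternation and so survives the disruption, and only afterwards impose the neighboring-versus-jump classification as a purely combinatorial relabeling of an expansion that is by then already established.
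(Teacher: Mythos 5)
Your proposal is correct and follows exactly the route the paper takes: the paper's entire proof of this lemma is the single sentence ``This can be proven making use of Wick's theorem,'' and your expansion into contraction sets, the split into diagonal (neighboring-pair) versus off-diagonal (jump) contractions, and the classification of the latter into $\mathcal{R}(a,a^\dagger)$ is precisely the bookkeeping that invocation leaves implicit. Your additional care about making the contraction expansion rigorous by inducting on the contraction-set identity itself (rather than on the alternating pair structure) is a sound way to fill in the one detail the paper never addresses.
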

\begin{proof}
This can be proven making use of Wick's theorem.
\end{proof}

\begin{exmp}
As an example consider the fourth and six orders, $n=4$ and $n=6$, in Dyson series. We have
\begin{equation}\label{LE0}
a_1a^\dagger_2a_3a^\dagger_4=a^\dagger_2a^\dagger_4a_1a_3+a^\dagger_4a_3\delta(\omega_1-\omega_2)+a^\dagger_2a_1\delta(\omega_3-\omega_4)
+\delta(\omega_1-\omega_2)\delta(\omega_3-\omega_4)+\underbrace{a^\dagger_2a_3\delta(\omega_1-\omega_4)}_{\mathcal{R}(a,a^\dagger)}.
\end{equation}
\begin{eqnarray}\label{LE1}\nonumber
a_1a^\dagger_2a_3a^\dagger_4a_5a^\dagger_6&=&a^\dagger_2a^\dagger_4a^\dagger_6a_1a_3a_5
+\delta(\omega_1-\omega_2)\delta(\omega_3-\omega_4)\delta(\omega_5-\omega_6)
+a^\dagger_2a^\dagger_4a_1a_3\delta(\omega_5-\omega_6)+a^\dagger_2a^\dagger_6a_1a_5\delta(\omega_3-\omega_4)\\\nonumber
&+&a^\dagger_4a^\dagger_6a_3a_5\delta(\omega_1-\omega_2)
+(a^\dagger_6a_5\delta(\omega_3-\omega_4)+a^\dagger_4a_3\delta(\omega_5-\omega_6))\delta(\omega_1-\omega_2)
+a^\dagger_2a_1\delta(\omega_5-\omega_6)\delta(\omega_3-\omega_4)\\\nonumber
&+&\underbrace {a^\dagger_2a^\dagger_4a_3a_5\delta(\omega_1-\omega_6)
+a^\dagger_2a^\dagger_4a_1a_5\delta(\omega_3-\omega_6)+a^\dagger_2a^\dagger_6a_3a_5\delta(\omega_1-\omega_4)
+a^\dagger_4a_5\delta(\omega_1-\omega_2)\delta(\omega_3-\omega_6)}_{\in\mathcal{R}(a,a^\dagger)}\\\nonumber
&+&\underbrace{(a^\dagger_2a_3\delta(\omega_5-\omega_6)+a^\dagger_2a_5\delta(\omega_3-\omega_6))\delta(\omega_1-\omega_4)
+a^\dagger_2a_5\delta(\omega_1-\omega_6)\delta(\omega_3-\omega_4)}_{\in\mathcal{R}(a,a^\dagger)}.\\
\end{eqnarray}
\end{exmp}

For $\mathcal{A}_k$ defined in proposition \ref{prop1} the following lemma can also be proved.
\begin{sublemma}\label{lem:commuting_A.1}
\begin{equation}\label{L26a}
\mathcal{A}_k=\sum_{\substack{I,J\sim[n] \\ |I|+|J|=(n-k)/2 \\ I\cap J=\O}} :\! X_I(A',A)\!:\,X_J(F^r)+\mathcal{R}_k(A',A,F^r),
\end{equation}\cor
where
\begin{equation}\label{L26b}
X_I(A',A)=A(t_{i_1})A'(t_{i_1+1})\ldots A(t_{i_l})A'(t_{l_l+1}),\ \quad X_J(F)=F^r(t_{j_1}-t_{j_1+1})\ldots F^r(t_{j_m}-t_{j_m+1}),
\end{equation}
The object
$\mathcal{R}_k(A,A',F)$ consists of the terms of the form
\begin{equation}\label{LC2}
A(t_{i'_1}) A'(t_{i'_2})\ldots  A'(t_{i'_l}) \, A(t_{i''_1}) A(t_{i''_2})\ldots A(t_{i''_{l}})
F^r(t_{j'_1}-t_{j''_1})\ldots F^r(t_{j'_m}-t_{j''_m}),
\end{equation}
where all indices $i'$, $i''$, $j'$, $j ''$ are distinct from  each other, and at least one of $F^r$'s exhibit jump, i.e. $|j'_s -j''_s|>1$ for some $s$.
\end{sublemma}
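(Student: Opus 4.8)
The plan is to recognize the sublemma as the smeared counterpart of Lemma \ref{lem:commuting_a}, obtained by the replacements $a\mapsto A$, $a^\dagger\mapsto A'$ and $\delta\mapsto F^r$, and to deduce it from that lemma. Recall from Proposition \ref{prop1} that $\mathcal{A}_k=\sum_{P}X_P(A,A')$, where the sum runs over admissible patterns $P=\{(p_1,p_1+1),\ldots,(p_m,p_m+1)\}$ with $m=(n-k)/2$ and $p_{j+1}\geq p_j+2$, and $X_P(A,A')=A(t_{p_1})A'(t_{p_1+1})\cdots A(t_{p_m})A'(t_{p_m+1})$. Since $[A(t),A(s)]=[A'(t),A'(s)]=0$ and, by Eq. \eqref{CL1}, $[A(t),A'(s)]=F^r(t-s)$ is a c-number, the Wick-theorem argument that proves Lemma \ref{lem:commuting_a} transfers verbatim to $A,A'$ with $F^r$ in place of $\delta$; equivalently, one may expand each $A,A'$ in modes via $A(t)=\int dk\,f(\omega_k,t)a_k$, apply Lemma \ref{lem:commuting_a} to the resulting alternating string $a\,a^\dagger\cdots$, and re-integrate, whereupon each contraction $\delta$ collapses to the corresponding $F^r$ exactly as in Eq. \eqref{CL1}.

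First I would dispose of a single pattern term. Relabelling the $2m$ operator slots of $X_P(A,A')$ by internal positions $1',\ldots,(2m)'$ and applying the transferred lemma yields $X_P(A,A')=\sum_{I',J'}:\!X_{I'}(A',A)\!:\,X_{J'}(F^r)+\mathcal{R}(A,A')$, where $I',J'$ partition the internal pairs $(1',2'),\ldots,((2m-1)',(2m)')$. The decisive step is to classify the resulting contractions. A self-contraction of an internal pair produces $F^r(t_{p_j}-t_{p_j+1})$, whose indices differ by exactly $1$, because that slot carries the consecutive original labels $p_j,p_j+1$; a cross-contraction joins the annihilator of pair $j$ to the creator of a later pair $j'>j$, giving $F^r(t_{p_j}-t_{p_{j'}+1})$, and admissibility $p_{j'}\geq p_j+2$ forces the index gap to be at least $3$. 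Hence self-contractions are \emph{exactly} the no-jump terms and cross-contractions are \emph{exactly} the jump terms, so the cross-contraction part is precisely $\mathcal{R}(A,A')$ and has the form \eqref{LC2} (distinct indices, at least one jumping $F^r$).

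Next I would reassemble across patterns. For fixed $P$ the no-jump terms are indexed by the subset $J\subseteq P$ of self-contracted pairs, with $I=P\setminus J$ left as normally ordered operators, giving $:\!X_I(A',A)\!:\,X_J(F^r)$. Summing over all admissible $P$ and all $J\subseteq P$ is the same as summing over all disjoint pairs $(I,J)$ with $I\cup J=P$ admissible and $|I|+|J|=(n-k)/2$, the bijection being $(P,J)\leftrightarrow(I,J)$ with $I=P\setminus J$ and inverse $P=I\cup J$. Because the $F^r$-arguments of a no-jump term determine $J$ and its operator content determines $I$, this correspondence is one-to-one with no over-counting, and it reproduces exactly the double sum of Eq. \eqref{L26a}. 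All jump contributions, accumulated over $P$, constitute $\mathcal{R}_k(A',A,F^r)$.

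The main obstacle will not be the algebra, which is inherited wholesale from Lemma \ref{lem:commuting_a}, but the combinatorial bookkeeping: making the internal-to-original relabelling precise enough to certify the sharp dichotomy \emph{self-contraction $\Leftrightarrow$ no jump, cross-contraction $\Leftrightarrow$ jump}, and checking that the pattern sum reassembles the double sum bijectively without double-counting or hidden cancellation between jump and no-jump terms. The admissibility gap $p_{j+1}\geq p_j+2$ is the load-bearing hypothesis here, since it is exactly what prevents a cross-contraction from masquerading as an adjacent (no-jump) one.
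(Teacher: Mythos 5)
Your proposal is correct and takes essentially the same route as the paper's own proof: both transfer Lemma \ref{lem:commuting_a} to the $A,A'$ operators via the scalar commutation relation $[A(t),A'(s)]=F^r(t-s)$, apply it to each admissible pattern term of $\mathcal{A}_k$ from Proposition \ref{prop1}, and then reassemble the double sum over patterns $S$ and partitions $I\cup J=S$ into the single sum over disjoint pairs $(I,J)$ with $|I|+|J|=(n-k)/2$, collecting all jump terms into $\mathcal{R}_k$. Your explicit treatment of the self-contraction/no-jump versus cross-contraction/jump dichotomy (using the gap $p_{j+1}\geq p_j+2$) and of the bijection $(P,J)\leftrightarrow(I,J)$ merely makes precise bookkeeping that the paper asserts without detail.
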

\begin{proof}
In lemma \ref{lem:commuting_a} we have proved the same relation for  operators $a,a^\dagger$, where we have only exploited canonical commutation relation, and the fact that $\delta(\omega_i-\omega_j)$ is a scalar.
Now, operators $A,\tilde A$ satisfy the same commutation relation, with
$F^r(t_i - t_j)$ in place of Dirac delta. Since $F$ is also scalar, we
obtain
\begin{align}\label{L27a}
&A(t_{p_1})A'(t_{p_1+1})\ldots A(t_{p_{(n-k)/2}})A'(t_{p_{(n-k)/2}+1}) \equiv X_S(A,\tilde A)=
\nonumber \\
&=\sum_{\substack{I\cup J= S  \\ I\cap J=\O}}
:\! X_I(A,\tilde A)\!:\, X_J(F^r) +\mathcal{R}_S(A,\tilde A)
,
\end{align}
Here $\mathcal{R}_S(A,A',F^r)$
has the same feature as
$\mathcal{R}(a,a^\dagger,\delta)$ (i.e. it is the sum of the terms with "jump"). Therefore we have
\begin{eqnarray}\label{L27b}\nonumber
\mathcal{A}_k&=&\sum_{p_1,\ldots,p_{(n-k)/2}} A(t_{p_1})A'(t_{p_1+1})\ldots A(t_{p_{(n-k)/2}})A'(t_{p_{(n-k)/2}+1})\equiv \sum_{\substack{S\sim [n]\\ |S|=(n-k)/2}} X_S=\\\nonumber
&=&\sum_{\substack{S\sim [n]\\ |S|=(n-k)/2}}\,\,
\sum_{\substack{I\cup J= S  \\ I\cap J=\O}}
:\! X_I(A,\tilde A)\!:\, X_J(F^r) +\mathcal{R}_S(A,\tilde A)=
 \\\nonumber
&=&\sum_{\substack{I,J\sim [n] \\ |I|+|J|=(n-k)/2}}:\!X_I(A',A)\!:\,X_J(F^r)+\mathcal{R}_k(A',A,F^r).\\
\end{eqnarray}
where $\mathcal{R}_k(A, \tilde A, F^r)=\sum_S \mathcal{R}_S$,
and by definition it has the same feature as $\mathcal{R}_S$, i.e. it is sum of terms with "jumps".
\end{proof}
\begin{exmp}
As an example, consider the fourth order, $n=4$, in Dyson series. We have
\begin{eqnarray}\label{LE2}\nonumber
\mathcal{A}_0&=&A(t_{1})A'(t_{2})A(t_{3})A'(t_{4})=\\\nonumber
&=&A'(t_{2})A'(t_{4})A(t_{1})A(t_{3})+A'(t_{2})A(t_{1})F^r(t_3-t_4)
+A'(t_{4})A(t_{3})F^r(t_1-t_2)+\underbrace{A'(t_{2})A(t_{3})F^r(t_1-t_4)}_{\mathcal{R}(A,A',F^r)}.\\
\end{eqnarray}
\end{exmp}\blk

We shall now apply the approximation introduced in Eq. \eqref{approximation}, $D(t_i-t_j)=b\delta(t_i-t_j)-F^r(t_i-t_j)\approx0$, then
\begin{equation}\label{L41}
\int dt_i\ldots\int dt_jF^r(t_i-t_j)\approx\int dt_i\ldots\int dt_jb\delta(t_i-t_j).
\end{equation}
Therefore we have the following rule
\begin{equation}\label{L42}
F^r(t_i-t_j)\ \quad \Rightarrow\ \quad b\delta(t_i-t_j).
\end{equation}
In the following the rule (\ref{L42}) will be applied.


Accordingly, here we define $\mathcal{A}^{appr}_k$ and $\mathcal{R}_k^{appr}(A,A',F^r)$ as follows.
\begin{equation}\label{L39b}
\mathcal{A}^{appr}_k\equiv\sum_{I,J}:X_I(A',A):X_J(\delta),
\end{equation}
in which the sum runs over sets $I$ of $J$ of indices, satisfying $I=\{i_1,i_1+1,\ldots, i_l,i_l+1\}$, $J=\{j_1,j_1+1\ldots, j_m,j_m+1\}$, $I\cup J=\{1,3,\ldots,(n-k)/2\}$, $I\cap J=\{\O\}$, $l+m=(n-k)/2$. $\mathcal{R}_k^{appr}(A,A',F^r)$ is like $\mathcal{R}_k(A,A',F^r)$ in Eq. (\ref{L26a}) where $F^r(t_{j_m}-t_{j_m+1})$ is replaced by $\delta(t_{j_m}-t_{j_m+1})$. Therefore using lemma \ref{lemma4} we have
\begin{equation}\label{L44}
\int dt^{n-k}\mathcal{R}_k^{appr}(A,A',F^r)=0.
\end{equation}
Now defining
\begin{equation}\label{Happ}
\langle1|\tilde{H}^1\ldots \tilde{H}^n|1\rangle^{appr}
\end{equation}

we get
\begin{equation}\label{Sapp}
\langle 1|S(0,t)|1\rangle^{appr}=\sum_{n=0}^\infty(-i)^n\int dt^{n}\langle1|\tilde{H}^1\ldots \tilde{H}^n|1\rangle^{appr}=\sum_{n=0}^\infty(-i)^n\int dt^{n}\sum^{n}_{k=0}(ib)^k\mathcal{A}^{appr}_k.
\end{equation}

Before proceeding with the above equation, here we introduce a useful diagrammatic notation for time integrals.

\subsection{Diagrams and Tableaux}
The integrand of each integral in $\langle 1|S(0,T)|1\rangle^{appr}$ in Eq. \eqref{Sapp} which are $(ib)^k\mathcal{A}^{appr}_k$ is a sequence of $ib$s, the pairs of $A'(t_{i_l+1})A(t_{i_l})$ and $b\delta(t_{j_m}-t_{j_m+1})$s. We will schematically illustrate each $ib$ by an empty circle and every pair of $A'(t_{i_l+1})A(t_{i_l})$ by two green circles and each $b\delta(t_{j_m}-t_{j_m+1})$ by two brown circles. Therefore each integrand can be represented by a sequence of such three types of objects. We will accomplish this in three steps. As an example consider the following integral whose integrand has been schematically illustrated in Fig. (\ref{Fig4}) (Step (I)):
\begin{eqnarray}\label{PE1}\nonumber
\mathcal{B}&=&\int dt^{34} b^{14}A(t_2)A'(t_3)A(t_7)A'(t_8)A(t_9)A'(t_{10})A(t_{13})A'(t_{14})A(t_{19})A'(t_{20})\\\nonumber
&\times&A(t_{21})A'(t_{22})A(t_{23})A'(t_{24})A(t_{28})A'(t_{29})A(t_{30})A'(t_{31})A(t_{33})A'(t_{34}).\\
\end{eqnarray}
After commuting the operators through many normally ordered terms are produced. For instance, one of them is the following (Step (II)):
\begin{eqnarray}\label{PE2}\nonumber
\mathcal{B}_i&=&\int dt^{34} b^{19}A'(t_3)A'(t_8)A'(t_{22})A'(t_{24})A'(t_{34})A(t_2)A(t_7)A(t_{21})A(t_{23})A(t_{33})\\\nonumber
&\times&\delta(t_{9}-t_{10})\delta(t_{13}-t_{14})\delta(t_{19}-t_{20})\delta(t_{28}-t_{29})\delta(t_{30}-t_{31}).\\
\end{eqnarray}
And then from lemma \ref{lemma4.1} we know that any $\delta(t_j-t_{j+1})$ reduces the time integral by one. Hence (Step (III))
\begin{equation}\label{PE3}
\mathcal{B}_i
=\int dt^{29} b^{19}A'(t_3)A'(t_8)A'(t_{19})A'(t_{21})A'(t_{29})A(t_2)A(t_7)A(t_{18})A(t_{20})A(t_{28}).
\end{equation}
Then we encode the positions of these three types of circles in the \textit{final} distribution, i.e., in Step (III) by a diagram $\mu$ and a tableau $y$ (see Fig. (\ref{Fig5})). The diagram $\mu$ is used to show the general form of the integrand, i.e., the distribution of $b$s and the pairs $A'A$ and the tableau $y$, which is the diagram $\mu$ filled with the sequences of 0s (for $ib$s) and 1s (for $b\delta(t_j-t_{j+1})$s), demonstrates the specific form of the integrand which means that it precisely encodes the positions of $ib$s and $b\delta(t_j-t_{j+1})$s. The number of blocks in each row $|\mu_j|$ shows the final number of time integrals in Step (III). After each row there exists a pair of $A'(t_i)A(t_{i+1})$ and $P(\mu)=(|\mu_1|+1, |\mu_1|+|\mu_2|+3, |\mu_1|+|\mu_2|+|\mu_3|+5,\ldots,\sum_{j=1}^{l}|\mu_j|+2l-1)$ completely determines the position of $A(t_i)$ (hence the position of the pair). Note that if a row is empty, this means that the pairs $A'(t_{i+1})A(t_{i})$ come one after another. It is seen that for $l=5$ pairs of $A'A$ there will be $l+1=6$ rows in diagram $\mu=(\mu_1,\mu_2,\ldots, \mu_{l+1})$ (see Fig. (\ref{Fig5})). In this example we have $P(\mu)=(1+1, 1+3+3, 1+3+9+5,1+3+9+0+7,1+3+9+0+6+9,1+3+9+0+6+0+11)=(2, 7, 18,20,28)$.
\begin{figure}[h]
\centering
\includegraphics[width=10cm]{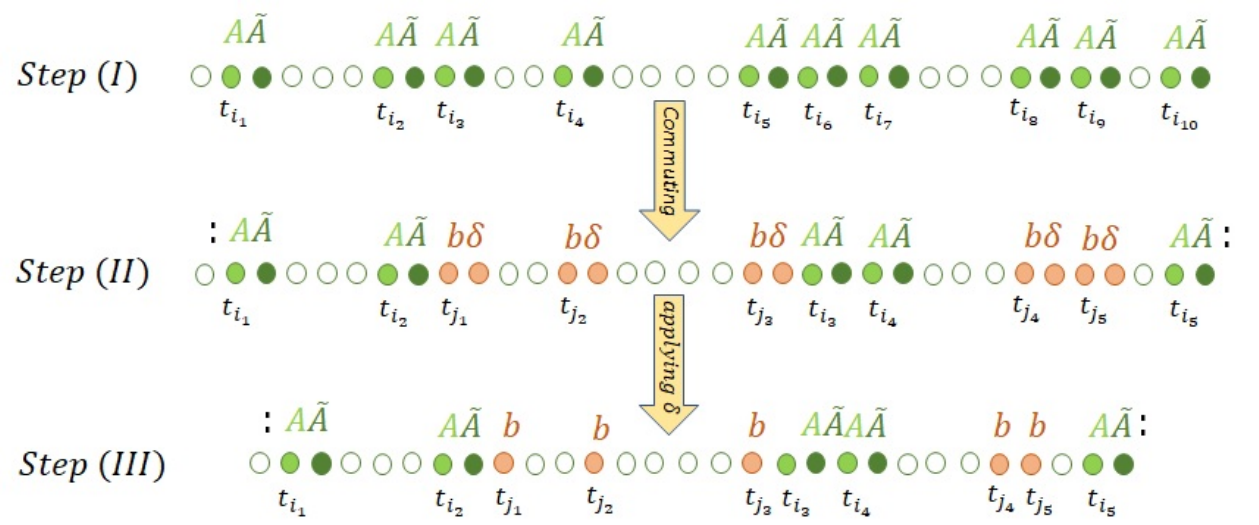}
\caption{Schematic illustration of distribution of the integrand containing $(ib)^k$ (empty circles), $b\delta(t_{j_m}-t_{j_m+1})$ (every two brown circles), and the pairs $A(t_{i_l})A'(t_{i_l+1})$ (every two green circles) for $n=34$ and $k=14$. Step (I). The original integrand. Step (II). Commuting the operators through. Step (III). Applying $\delta(t_{j_m}-t_{j_m+1})$.'$:x:$' denotes normal ordering.}
\label{Fig4}
\end{figure}
\begin{figure}[h]
\centering
\includegraphics[width=10cm]{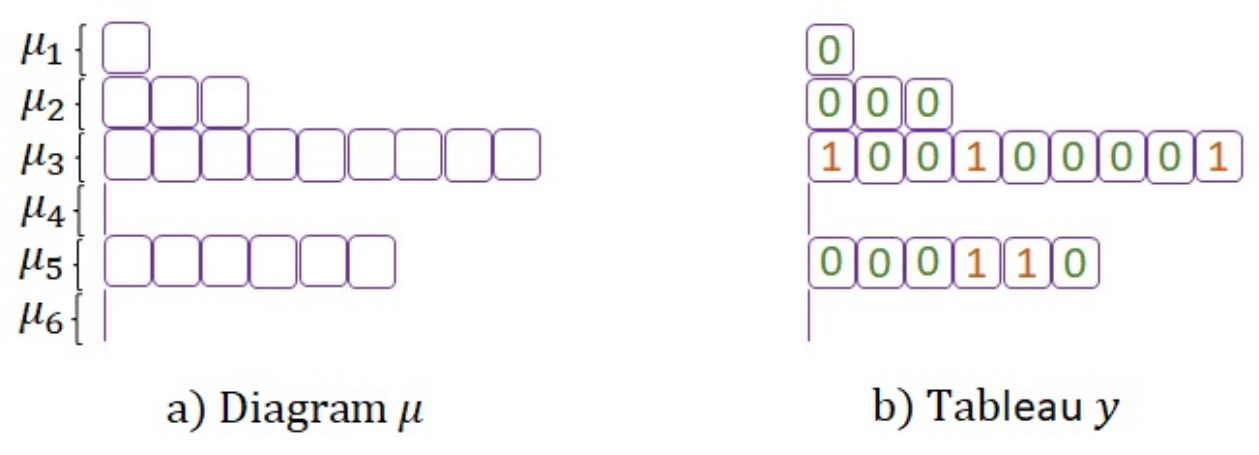}
\caption{For $l$ pairs of $A'A$ there will be $l+1$ rows in diagram $\mu=(\mu_1,\mu_2,\ldots, \mu_{l+1})$. a) The diagram $\mu$ shows how $b$s and the pairs $A'A$ are distributed in Fig. (\ref{Fig4}). As can be seen $P(\mu)=(2, 7, 18,20,28)$ completely determines the position of $A(t_{i_l})$ in Step (III) of Fig. (\ref{Fig4}). b) The tableau $y$, which is the diagram $\mu$ filled with the sequences of 0s and 1s, specifies the positions of $ib$s and $b\delta$s in Step (III) of Fig. (\ref{Fig4}).}
\label{Fig5}
\end{figure}

\subsection{Proof of the main result}

\begin{prop}\label{prop2}
\begin{equation}\label{P1}
\langle 1|S_{I}(t,0)|1\rangle^{appr}=\sum_{l=0}^\infty(-1)^l\sum_{\mu\vdash l+1}b
^{|\mu|}\int dt^{|\mu|+2l}:\!X_{P(\mu)}(A,A')\!:\sum_{y\in\mu}(-1)^{wt(y)},
\end{equation}
where
\begin{align}
P(\mu)=\{(P_1,P_1+1),\ldots,(P_l,P_l+1)\},\quad \text{with} \quad  P_r=\sum_{j=1}^{r}|\mu_j|+2r-1,\quad r=1,\ldots,l
\end{align}
The second sum runs over all $\mu$ with $l+1$ rows (denoted here by $\mu\vdash l+1$) and $y\in\mu$ means that $y$ is $\mu$  filled with sequences of 0s and 1s showing the positions of $ib$s and $F$s after commuting the the pairs of $A(t_{i_l})A'(t_{i_l+1})$. The index $P(\mu)$  shows the position of the pairs (see Figs. (\ref{Fig4}) and (\ref{Fig5}) for more detail). $|\mu|$ is the number of blocks in $\mu$ and $w(y)$ is the number of 1s in $y$.
\end{prop}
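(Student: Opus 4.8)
The plan is to take the expression for $\langle 1|S_I(t,0)|1\rangle^{appr}$ already assembled in Eq. \eqref{Sapp}, namely the double sum $\sum_n(-i)^n\int dt^n\sum_k(ib)^k\mathcal{A}^{appr}_k$, and to re-index it so that the bookkeeping is organized per \emph{final} normal-ordered integrand rather than per Dyson order. First I would insert the approximated normal-ordering expansion \eqref{L39b}, $\mathcal{A}^{appr}_k=\sum_{I,J}:\!X_I(A',A)\!:X_J(\delta)$, discarding the jump remainder $\mathcal{R}^{appr}_k$, which is legitimate because it integrates to zero by Eq. \eqref{L44}. At this stage the only surviving contractions are the self-contractions of adjacent pairs $A(t_p)A'(t_{p+1})\to F^r(t_p-t_{p+1})$, and the rule \eqref{L42} replaces each such $F^r$ by $b\,\delta(t_p-t_{p+1})$.

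Next I would apply the deltas. Using that each $\delta(t_j-t_{j+1})$ removes one time integration (exactly the step already used in passing from Step (II) to Step (III) above, cf. Eq. \eqref{PE3}), every self-contracted pair collapses from two time variables to one while contributing a single factor $b$. After this step a generic surviving term consists of three kinds of objects placed along the time axis: $l\equiv|I|$ surviving pairs $A(t)A'(t')$, $m\equiv|J|$ collapsed contractions (one variable, one $b$ each), and $k$ factors $ib$ (one variable each). The total number of remaining integrations is then $n-m=k+2l+m$ and the accumulated power of $b$ is $k+m$.

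The heart of the argument is the re-parametrization. I would fix the final normal-ordered operator skeleton — the $l$ pairs — and collect \emph{all} contributing terms by recording how the $k+m$ ``$b$-blocks'' are distributed into the $l+1$ gaps created by the pairs. This distribution is exactly a diagram $\mu=(\mu_1,\dots,\mu_{l+1})$ with $|\mu|=k+m$ blocks, while the assignment of each block to being an $ib$ (label $0$) or a collapsed contraction (label $1$) is exactly a tableau $y$ with $wt(y)=m$ ones. The position index $P(\mu)$ then follows by direct counting: the $r$-th pair is preceded by $r-1$ earlier pairs (occupying $2(r-1)$ slots) and by $\sum_{j\le r}|\mu_j|$ $b$-blocks, giving $P_r=\sum_{j=1}^r|\mu_j|+2r-1$. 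The map $(n,k,I,J)\mapsto(\mu,y)$ is a bijection onto diagram--tableau pairs, since $\mu$ and $y$ jointly recover $l$, $m$, $k$ and hence $n=2l+2m+k$; crucially, the \emph{same} skeleton $\mu$ receives contributions from many different Dyson orders, one for each filling $y$.

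Finally I would settle the sign. Writing $n=2l+2m+k$ and combining $(-i)^n$, the prefactor $(ib)^k$ from Proposition \ref{prop1}, and the $m$ factors $b$ produced by the contractions, the identities $(-i)^k i^k=1$ and $(-i)^{2(l+m)}=(-1)^{l+m}$ collapse everything to $(-1)^l(-1)^{m}b^{k+m}=(-1)^l(-1)^{wt(y)}b^{|\mu|}$, which is precisely the coefficient appearing in Eq. \eqref{P1}. Summing over all $l$, all shapes $\mu\vdash l+1$, and all fillings $y\in\mu$ then yields the claimed formula, leaving the inner tableau sum $\sum_{y\in\mu}(-1)^{wt(y)}$ untouched — its evaluation is deferred to the subsequent passage to Theorem \ref{thm1}, where it will drive the cancellation of the $b$-dependent terms. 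The main obstacle I anticipate is exactly this combinatorial bijection of the third step: one must verify that distinct Dyson configurations (living at different orders $n$) map to distinct fillings of the same diagram, that the position formula $P(\mu)$ is consistent with the index shifts induced by eliminating the deltas, and that no configuration is double-counted or omitted, so that $\sum_{y\in\mu}(-1)^{wt(y)}$ runs over exactly the right set.
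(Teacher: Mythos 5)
Your proposal is correct and follows essentially the same route as the paper's own proof: insert the approximated expansion \eqref{L39b} into \eqref{Sapp}, drop the jump remainder via \eqref{L44}, collapse each contracted pair with Lemma \ref{lemma4.1}, re-index the surviving configurations by diagram--tableau pairs $(\mu,y)$ with $|\mu|=k+m$ and $wt(y)=m$, and combine $(-i)^n(ib)^k b^m=(-1)^{l+m}b^{k+m}$ to get the coefficient $(-1)^l(-1)^{wt(y)}b^{|\mu|}$, leaving $\sum_{y\in\mu}(-1)^{wt(y)}$ unevaluated. Your explicit flagging of the bijection between Dyson configurations and fillings $(\mu,y)$, together with the counting $P_r=\sum_{j=1}^{r}|\mu_j|+2r-1$, is exactly the bookkeeping the paper performs (implicitly) in passing from Eq. \eqref{P2} to Eqs. \eqref{P3}--\eqref{P4}.
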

\begin{proof}\cor
Substituting Eq. (\ref{L39b}) into Eq. (\ref{Sapp}) we get
\begin{eqnarray}\label{P2}\nonumber
\langle 1|S_{I}(t,0)|1\rangle^{appr}&=&\sum_{n=0}^\infty(-i)^n\int dt^{n}\sum^{n}_{k=0}(ib)^k\mathcal{A}^{appr}_k\\\nonumber
&=&\sum_{n=0}^\infty\int dt^{n}\sum^{n}_{k=0}(-i)^{n-k}b^k\sum_{\substack{I,J\sim[n],\\I\cap J=\O,\\|I|+|J|=(n-k)/2}}:\!X_I(A,A')\!:\,X_J(\delta)=\\\nonumber
&=&\sum_{n=0}^\infty\sum_{k,l,m}(-1)
^{l+m}b^{k+m}\sum_{\substack{I,J\sim[n],\\I\cap J=\O,\\|I|=l,|J|=m}}\int dt^{n}:\!X_I(A,A')\!:\,X_J(\delta),\\
\end{eqnarray}
where the second sum in the last equality runs over all $k$ such that $n-k$ is even and $l, m\geq0$ such that $m+l=(n-k)/2$.
where $X_I(A',A)$ and $X_J(\delta)$ were introduced in Eqs. (\ref{L26b}) and (\ref{L39b}), respectively, and $":x:"$ denotes normal ordering. Now Eq. (\ref{P2}) may be rewritten as
\begin{align}
\langle 1|S_{I}(t,0)|1\rangle^{appr}=
\sum_{k,l,m=0}^\infty(-1)
^{l+m}b^{k+m}\sum_{\substack{I,J\sim \mathbb{N},\\I\cap J=\O,\\|I|=l,|J|=m}}\int dt^{m+l+2k}:\!X_I(A,A')\!:\,X_J(\delta).\\
\end{align}
We can now further rewrite it using diagrams:
\begin{equation}\label{P3}
\langle 1|S_{I}(t,0)|1\rangle^{appr}=\sum_{l=0}^\infty(-1)^l\sum_{\mu\vdash l+1}b
^{|\mu|}\sum_{y\in\mu}(-1)^{wt(y)}\int dt^{|\mu|+2l}:\!X_{P(\mu)}(A,A')\!:,
\end{equation}
where the second sum runs over all $\mu$ with $l+1$ rows (denoted here by $\mu\vdash l+1$) and $y\in\mu$ means that $y$ is $\mu$ already filled with sequences of 0s and 1s showing the positions of $ib$s and $F$s commuting the pairs of $A(t_{i_l})A'(t_{i_l+1})$ and $P(\mu)$
for $\mu$ with $l+1$ rows is given by
\begin{align}
P(\mu)=\{(P_1,P_1+1),\ldots,(P_l,P_l+1)\},\quad \text{with} \quad  P_r=\sum_{j=1}^{r}|\mu_j|+2r-1.
\end{align}
Here $|\mu|$ is the number of blocks in $\mu$ and $wt(y)$ is the number of 1s in $y$.
Since $X_{P(\mu)}$ depends only on $\mu$ and not on $y$,  Eq. (\ref{P2}) may be rewritten as
\begin{equation}\label{P4}
\langle 1|S_I(t,0)|1\rangle^{appr}=\sum_{l=0}^\infty(-1)^l\sum_{\mu\vdash l+1}b
^{|\mu|}\int dt^{|\mu|+2l}:\!X_{P(\mu)}(A,A')\!:\sum_{y\in\mu}(-1)^{w(y)}.
\end{equation}
\end{proof}

\begin{lemma}\label{lemma5}
For all $\mu$ apart from empty one $|\mu|=0$ we have
\begin{equation}\label{L45}
\sum_{y\in\mu}(-1)^{w(y)}=0.
\end{equation}
\end{lemma}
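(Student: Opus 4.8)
The plan is to reduce the statement to the elementary identity $(1-1)^{m}=0$ for $m\geq 1$. The first thing I would establish is the precise combinatorial model underlying the sum: by the construction in the Diagrams and Tableaux subsection, a tableau $y\in\mu$ is nothing more than a filling of the $|\mu|$ boxes of the diagram $\mu$ by an arbitrary sequence of $0$s (marking an $ib$) and $1$s (marking a $b\delta$), with no further constraint. The diagram $\mu$ itself already fixes the positions of the $l$ pairs $A'A$ through $P(\mu)$, so the only remaining freedom is the independent $0/1$ label attached to each of the $|\mu|$ distinguishable boxes. Consequently the tableaux $y\in\mu$ are in bijection with subsets of the box set, and there are exactly $2^{|\mu|}$ of them.

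Once this model is in place, I would observe that under the bijection the weight $w(y)$ — the number of $1$s in $y$ — is simply the cardinality of the subset of boxes carrying a $1$. Grouping tableaux by the value $w=w(y)$, the number of tableaux with exactly $w$ ones is $\binom{|\mu|}{w}$, since we are choosing which $w$ of the $|\mu|$ boxes are labelled $1$. The sum then factorises over boxes and collapses by the binomial theorem:
\begin{equation}
\sum_{y\in\mu}(-1)^{w(y)}=\sum_{w=0}^{|\mu|}\binom{|\mu|}{w}(-1)^{w}=(1-1)^{|\mu|}=0,
\end{equation}
which vanishes for every nonempty diagram $|\mu|\geq 1$. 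For the empty diagram $|\mu|=0$ the empty product equals $1$, which is exactly why that case must be excluded from the claim.

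The algebra here is immediate, so I do not expect any real computational difficulty; the main (if modest) obstacle is the bookkeeping step of confirming that the fillings are genuinely unconstrained — that every $0/1$ assignment to the boxes of $\mu$ arises from some admissible term in the expansion of Proposition \ref{prop2}, and that distinct assignments are never identified. This amounts to checking that the encoding of a tableau into the positions of the $ib$s and $b\delta$s has no hidden ordering or multiplicity restriction within a row. Once that is verified, the cancellation is in fact exhibited by an explicit sign-reversing involution: flipping the label of (say) the first box of $\mu$ changes $w(y)$ by one and hence the sign by $-1$, pairing each tableau with a partner of opposite sign, so the signed total is zero. The binomial identity above is just the clean accounting of this involution.
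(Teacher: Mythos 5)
Your proof is correct and takes essentially the same route as the paper's: group the tableaux $y\in\mu$ by their weight $w(y)=m$ and collapse the sum with the binomial identity $\sum_{m=0}^{|\mu|}(-1)^m\binom{|\mu|}{m}=(1-1)^{|\mu|}=0$, valid precisely when $|\mu|\geq 1$. The only difference is bookkeeping: the paper counts the weight-$m$ tableaux by the multinomial $\frac{(|\mu|+l)!}{l!\,m!\,(|\mu|-m)!}$ rather than your cleaner $\binom{|\mu|}{m}$, but the discrepancy is the $m$-independent factor $\binom{|\mu|+l}{l}$, which factors out so that both arguments reduce to the same alternating binomial sum.
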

\begin{proof}
\begin{eqnarray}\label{L46}\nonumber
\sum_{y\in\mu}(-1)^{w(y)}&=&\sum_{m=0}^{|\mu|}(-1)^m\dfrac{(|\mu|+l)!}{l!m!(|\mu|-m)!}\\\nonumber
&=&\dfrac{(|\mu|+l)(|\mu|+l-1)\ldots(|\mu|+1)}{l!}\sum_{m=0}^{|\mu|}(-1)^m\dfrac{|\mu|!}{m!(|\mu|-m)!}\\\nonumber
&=&\dfrac{(|\mu|+l)(|\mu|+l-1)\ldots(|\mu|+1)}{l!}\sum_{m=0}^{|\mu|}(-1)^m\left(\begin{array}{l}
|\mu| \\
m
\end{array}\right).\\
\end{eqnarray}
From the Binomial theorem \cite{Rotman} we know that if $x$ and $y$ are variables and $n\in \mathds{N}$, then
\begin{equation}\label{Bi1}
(x+y)^{n}=\sum_{k=0}^{n}\left(\begin{array}{l}
n \\
k
\end{array}\right) x^{n-k} y^{k}.
\end{equation}
Now choosing $x=1$ and $y=-1$ we get
\begin{equation}\label{Bi2}
0=(1+(-1))^{n}=\sum_{k=0}^{n}\left(\begin{array}{l}
n \\
k
\end{array}\right) 1^{n-k}(-1)^{k}=\sum_{k=0}^{n}\left(\begin{array}{l}
n \\
k
\end{array}\right) (-1)^{k}=\sum_{m=0}^{|\mu|}(-1)^m\left(\begin{array}{l}
|\mu| \\
m
\end{array}\right).
\end{equation}
\end{proof}
We are now in a position to prove a proposition, that gives us Eq. \eqref{th1} from theorem \ref{thm1}. The other equations from this theorem are obtained analogously.

\begin{prop}
\begin{align}
\langle 1|S_{I}(t,0)|1\rangle^{appr}=
\sum_{l=0}^\infty
(-i)^{2l} \int dt^{2l}
 A'(t_2) A'(t_4) \ldots A'(t_{2l})\, A(t_1) A(t_3)\ldots A(t_{2l-1}).
\end{align}
\end{prop}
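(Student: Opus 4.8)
The plan is to obtain the statement as an immediate corollary of Proposition \ref{prop2} (Eq. \eqref{P1}) combined with Lemma \ref{lemma5}. Equation \eqref{P1} already writes $\langle 1|S_{I}(t,0)|1\rangle^{appr}$ as an outer sum over $l$ of an inner sum over all diagrams $\mu$ with $l+1$ rows, each term carrying the alternating weight $\sum_{y\in\mu}(-1)^{wt(y)}$. The first step is simply to invoke Lemma \ref{lemma5}: for every $\mu$ with $|\mu|>0$ this alternating sum vanishes, so all such diagrams drop out. Consequently, for each fixed $l$ the only surviving contribution is the unique diagram consisting of $l+1$ empty rows, i.e. the one with $|\mu|=0$.

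The second step is to evaluate this surviving term. For the empty diagram one has $b^{|\mu|}=b^{0}=1$, the integral collapses to $\int dt^{2l}$, and the inner alternating sum contains only the empty tableau with $wt(y)=0$, hence equals $1$. It then remains to identify the index set $P(\mu)$: since every $|\mu_j|=0$, the formula $P_r=\sum_{j=1}^{r}|\mu_j|+2r-1$ reduces to $P_r=2r-1$, giving $P(\mu)=\{(1,2),(3,4),\ldots,(2l-1,2l)\}$.

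The third step is to read off the normal-ordered operator $:\!X_{P(\mu)}(A,A')\!:$ from the definition in Eq. \eqref{eq:X-notation}. Taking $i_r=2r-1$ and $i_r+1=2r$, normal ordering moves all creation factors $A'$ to the left and all annihilation factors $A$ to the right, yielding precisely $A'(t_2)A'(t_4)\ldots A'(t_{2l})\,A(t_1)A(t_3)\ldots A(t_{2l-1})$. To finish, I would reconcile the prefactors: the outer sign in \eqref{P1} is $(-1)^l$, and since $(-i)^{2l}=(-1)^l$, this is exactly the coefficient appearing in the claimed formula, so the two expressions agree term by term in $l$ (with the $l=0$ term contributing the field identity $\mathbb{I}_F$).

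Since the argument is essentially bookkeeping on top of the two quoted results, I do not expect a genuine obstacle. The one place that demands care is the index tracking in the second step: one must verify that the empty diagram is the \emph{only} diagram with $|\mu|=0$ for each $l$, so that nothing escapes the cancellation of Lemma \ref{lemma5}, and that it indeed generates the consecutive pairing $(1,2),(3,4),\ldots,(2l-1,2l)$ rather than some shifted labelling — it is this pairing that produces the alternating $A'(t_{2j})A(t_{2j-1})$ pattern demanded on the right-hand side.
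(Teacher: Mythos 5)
Your proposal is correct and follows essentially the same route as the paper's own proof: both insert Lemma \ref{lemma5} into Eq. \eqref{P1} to kill every diagram with $|\mu|>0$, leaving only the null diagram with pairing $(1,2),(3,4),\ldots,(2l-1,2l)$, whose normal-ordered $X$ and the identity $(-1)^l=(-i)^{2l}$ give the stated formula. The bookkeeping points you flag (uniqueness of the empty diagram for each $l$ and the consecutive pairing $P_r=2r-1$) are exactly the points the paper verifies as well.
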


\begin{proof}

Inserting Eq. \eqref{L45}
from lemma   	 \ref{lemma5}
into Eq. \eqref{P1} we obtain
the required result.
We see that only one diagram is left in the sum over $\mu$ - the trivial one - with no blocks (let us call it $\mu_{null}$).
This diagram has very simple $P(\mu_{null})=(1,2),(3,4),\ldots (2l-1,2l)$.
This gives
\begin{align}
\langle1|S_{I}(t,0)|1\rangle^{appr}=
\sum_{l=0}^\infty
(-i)^{2l} :\! X_P(\mu_{null})(A,A'):,
\end{align}
which by definition of $X$ gives the required result.
\end{proof}

\subsection{Example}
\begin{exmp}
As an example consider the term $b^2a_1a_2^\dagger a_3a_4^\dagger a_5a_6^\dagger$ in the eighth order of Dyson series. In Fig. (\ref{Fig6}) it is illustrated how $(ib)^2$, the pairs $A'A$ and $b\delta$s are distributed after commuting the operators through. Then taking the integral over the time we get the following terms:
\begin{figure}[h]
\centering
\includegraphics[width=16cm]{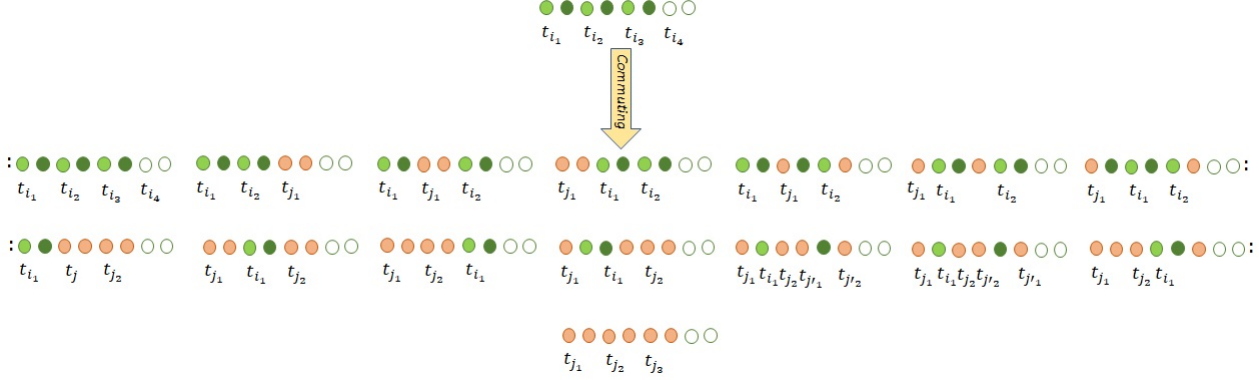}
\caption{Schematic illustration of distribution of $(ib)^2$ (empty circles), $b\delta(t_{j_m}-t_{j_m+1})$ (every two brown circles), and the pairs $A(t_{i_l})A'(t_{i_l+1})$ (every two green circles) for the term $b^2A(t_{1})A'(t_{2})A(t_{3})A'(t_{4})A(t_{5})A'(t_{6})$ (steps (I) and (II)). For step (III) from every two brown circles one of them is removed. '$:x:$' denotes normal ordering.}
\label{Fig6}
\end{figure}
\begin{equation}\label{L47}
\int dt^8b^2A'(t_{2})A'(t_{4})A'(t_{6})A(t_{1})A(t_{3})A(t_{5})=\int dt^8b^2:X_{P_1,P_2,P_3}:,
\end{equation}
where $X_{P_1,P_2,P_3}=A'(t_{2})A'(t_{4})A'(t_{6})A(t_{1})A(t_{3})A(t_{5})$, $P_1=\mu_1+2\times1-1=0+2-1=1$, $P_2=\mu_1+\mu_2+2\times2-1=0+0+4-1=3$ and $P_3=\mu_1+\mu_2+\mu_3+2\times3-1=0+0+0+6-1=5$ (see Fig. (\ref{Fig7}) for more detail).
\begin{equation}\label{L48}
\int dt^8b^2A'(t_{2})A'(t_{4})A(t_{1})A(t_{3})F(t_{5}-t_{6})\simeq\int dt^8b^2A'(t_{2})A'(t_{4})A(t_{1})A(t_{3})\delta(t_{5}-t_{6})
=\int dt^7b^2:X_{P_1,P_2}:,
\end{equation}
where $X_{P_1,P_2}=A'(t_{2})A'(t_{4})A(t_{1})A(t_{3})$, $P_1=\mu_1+2\times1-1=0+2-1=1$ and $P_2=\mu_1+\mu_2+2\times2-1=0+0+4-1=3$.
\begin{equation}\label{L49}
\int dt^8b^2A'(t_{2})A'(t_{6})A(t_{1})A(t_{5})F(t_{3}-t_{4})\simeq\int dt^8b^2A'(t_{2})A'(t_{6})A(t_{1})A(t_{5})\delta(t_{3}-t_{4})
=\int dt^7b^2:X_{P_1,P_2}:,
\end{equation}
where $X_{P_1,P_2}=A'(t_{2})A'(t_{5})A(t_{1})A(t_{4})$, $P_1=\mu_1+2\times1-1=0+2-1=1$ and $P_2=\mu_1+\mu_2+2\times2-1=0+1+4-1=4$.
\begin{equation}\label{L50}
\int dt^8b^2A'(t_{4})A'(t_{6})A(t_{3})A(t_{5})F(t_{1}-t_{2})\simeq\int dt^8b^2A'(t_{4})A'(t_{6})A(t_{3})A(t_{5})\delta(t_{1}-t_{2})
=\int dt^7b^2:X_{P_1,P_2}:,
\end{equation}
where $X_{P_1,P_2}=A'(t_{3})A'(t_{5})A(t_{2})A(t_{4})$, $P_1=\mu_1+2\times1-1=1+2-1=2$ and $P_2=\mu_1+\mu_2+2\times2-1=1+0+4-1=4$.
\begin{equation}\label{L51}
\int dt^8b^2A'(t_{2})A(t_{1})F(t_{3}-t_{4})F(t_{5}-t_{6})\simeq\int dt^8b^2A'(t_{2})A(t_{1})\delta(t_{3}-t_{4})\delta(t_{5}-t_{6})
=\int dt^6b^2:X_{P_1}:,
\end{equation}
where $X_{P_1}=A'(t_{2})A(t_{1})$ and $P_1=\mu_1+2\times1-1=0+2-1=1$.
\begin{equation}\label{L52}
\int dt^8b^2A'(t_{4})A(t_{3})F(t_{1}-t_{2})F(t_{5}-t_{6})\simeq\int dt^8b^2A'(t_{4})A(t_{3})\delta(t_{1}-t_{2})\delta(t_{5}-t_{6})
=\int dt^6b^2:X_{P_1}:,
\end{equation}
where $X_{P_1}=A'(t_{3})A(t_{2})$ and $P_1=\mu_1+2\times1-1=1+2-1=2$.
\begin{equation}\label{L53}
\int dt^8b^2A'(t_{6})A(t_{5})F(t_{1}-t_{2})F(t_{3}-t_{4})\simeq\int dt^8b^2A'(t_{6})A(t_{5})\delta(t_{1}-t_{2})\delta(t_{3}-t_{4})
=\int dt^6b^2:X_{P_1}:,
\end{equation}
where $X_{P_1}=A'(t_{4})A(t_{3})$ and $P_1=\mu_1+2\times1-1=2+2-1=3$.
\begin{equation}\label{L54}
\int dt^8b^2F(t_{1}-t_{2})F(t_{3}-t_{4})F(t_{5}-t_{6})\simeq\int dt^8b^2\delta(t_{1}-t_{2})\delta(t_{3}-t_{4})\delta(t_{5}-t_{6})
=\int dt^5b^2.
\end{equation}
\begin{figure}[h]
\centering
\includegraphics[width=8cm]{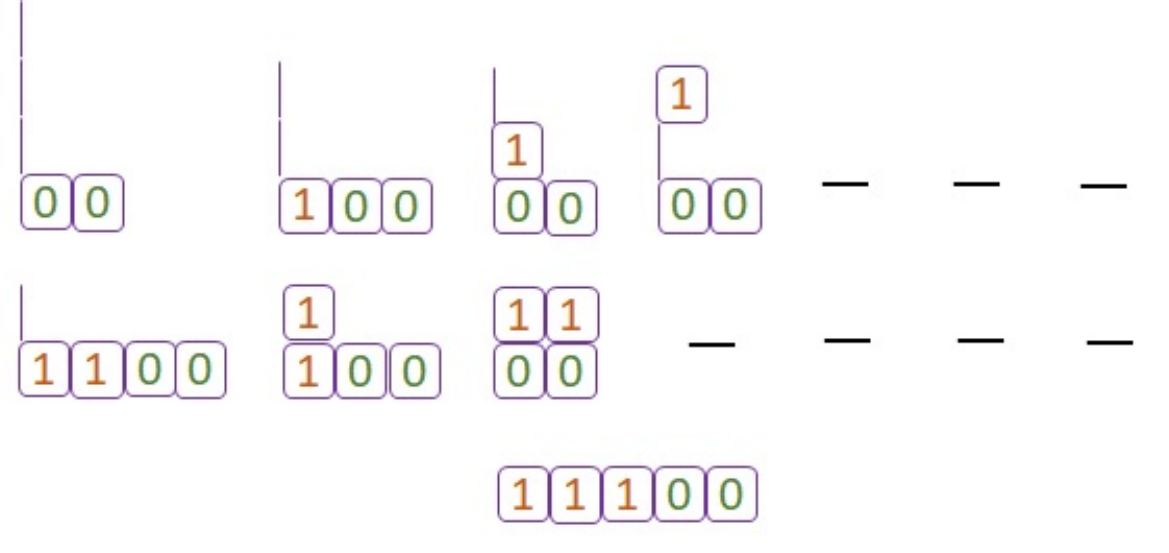}
\caption{The tableau $y$ for the term $b^2A(t_{1})A'(t_{2})A(t_{3})A'(t_{4})A(t_{5})A'(t_{6})$. The dash "-" indicates that this term lies in $\mathcal{R}(A,A')$ hence its integral over time vanishes.}
\label{Fig7}
\end{figure}
\end{exmp}

\subsection{Auxiliary lemmas}
\begin{lemma}\label{lem:commuting_A}
\begin{equation}\label{CL1-lem}
[A(t_{i}),A'(t_{j})]=F^r(t_{i}-t_{j}),
\end{equation}
where
\begin{equation}\label{LF}
F^r(t)= e^{i(\omega_A-i\gamma)t}F(t),\quad F(t)\equiv\int dk\, e^{-i\omega_kt} |f(\omega_k)|^2.
\end{equation}
\end{lemma}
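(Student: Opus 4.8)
The plan is to reduce the commutator to the canonical commutation relation of the bare field modes and then carry out the elementary exponential bookkeeping. First I would insert the definitions \eqref{AA}--\eqref{f}, writing
\[
A(t_i)=\int dk\, f(\omega_k)\,e^{-i(\omega_k-\omega_A+i\gamma)t_i}\,a_k,\qquad
A'(t_j)=\int dk'\, f^*(\omega_{k'})\,e^{i(\omega_{k'}-\omega_A+i\gamma)t_j}\,a^\dagger_{k'},
\]
and expand $[A(t_i),A'(t_j)]$ as a double integral over $k$ and $k'$. Since the exponential and coupling factors are c-numbers, they pull out of the commutator, leaving only $[a_k,a^\dagger_{k'}]$ inside; the terms $[a_k,a_{k'}]$ and $[a^\dagger_k,a^\dagger_{k'}]$ do not arise, because $A$ contains only annihilation operators and $A'$ only creation operators.

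The second step is to invoke the canonical commutation relation $[a_k,a^\dagger_{k'}]=\delta(k-k')$ (that is, $\delta^3(\vec{k}-\vec{k}')$ before the polarization label and vector notation are suppressed). The delta collapses the double integral to a single one, giving
\[
[A(t_i),A'(t_j)]=\int dk\,|f(\omega_k)|^2\,e^{-i(\omega_k-\omega_A+i\gamma)(t_i-t_j)}.
\]

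The final step is pure bookkeeping of the exponent. Setting $\tau=t_i-t_j$ and using the identity $-i(\omega_k-\omega_A+i\gamma)=-i\omega_k+i(\omega_A-i\gamma)$, the factor $e^{i(\omega_A-i\gamma)\tau}$ is independent of $k$ and factors out of the integral, while the remaining integral is exactly $F(\tau)=\int dk\,e^{-i\omega_k\tau}|f(\omega_k)|^2$. Recognizing $e^{i(\omega_A-i\gamma)\tau}F(\tau)=F^r(\tau)$ from \eqref{LF} yields $[A(t_i),A'(t_j)]=F^r(t_i-t_j)$, as claimed. There is no genuine obstacle here---the statement is, as the paper notes, immediate---and the only points requiring a little care are that only the $a\,a^\dagger$ cross-term survives the commutator and that the $+i\gamma$ appearing inside $f(\omega,t)$ must combine with $\omega_A$ to reproduce precisely the non-Hermitian shift $i(\omega_A-i\gamma)$ carried by $F^r$.
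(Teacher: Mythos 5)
Your proof is correct and follows essentially the same route as the paper's own argument: pull the c-number factors out of the commutator, apply the canonical commutation relation to collapse the double $k$-integral, and factor the $e^{i(\omega_A-i\gamma)(t_i-t_j)}$ term out to recognize $F^r$. The only (immaterial) difference is notational---the paper writes the mode commutator as $\delta(\omega_k-\omega_{k'})$ while you keep $\delta(k-k')$---so there is nothing further to add.
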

\begin{proof}
Using Eq. (\ref{AA}) we have
\begin{eqnarray}\label{CL2}\nonumber[A(t_{i}),A'(t_{j})]&=&\left [\int dk\ f(\omega_k,t_i)a_k,\int dk'\ f'(\omega_{k'},t_j)a^\dagger_{k'}\right]\\\nonumber
&=&\int dk\int dk'\ f(\omega_k,t_i)f'(\omega_{k'},t_j)[a_k,a^\dagger_{k'}]\\\nonumber
&=&\int dk\int dk'f(\omega_k,t_i)f'(\omega_{k'},t_j)\delta(\omega_k-\omega_{k'})\\\nonumber
&=&\int dk\ e^{i(\omega_A-i\gamma)(t_i-t_j)}e^{-i\omega_k(t_i-t_j)}|f(\omega_k)|^2\\\nonumber
&=&e^{i(\omega_A-i\gamma)(t_i-t_j)}\int dk\ e^{-i\omega_k(t_i-t_j)}|f(\omega_k)|^2\\\nonumber
&=&F^r(t_{i}-t_{j}),\\
\end{eqnarray}
where $F^r(t)$ is given by Eq. \eqref{Frt} as
\begin{align}\label{Ft}
F^r(t)\equiv e^{i(\omega_A-i\gamma)t}\int dk\, e^{-i\omega_kt} |f(\omega_k)|^2.
\end{align}
\end{proof}
\begin{lemma}\label{lemma4}
\begin{equation}\label{L35}
\int_{0}^{t}dt_1\ldots\int_{0}^{t_{i-1}}dt_i\ldots\int_{0}^{t_{j-1}}dt_j\ldots\int_{0}^{t_{k-1}}dt_kZ(t_1,\ldots,t_k)\delta(t_i-t_j)=0,\ \quad for\ \quad j>i+1,
\end{equation}
where $Z(t_1,\ldots,t_k)$ is an arbitrary function over $t_1,\ldots,t_k$.
\end{lemma}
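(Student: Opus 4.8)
The plan is to exploit the fact that, by the nested limits in Eq. \eqref{N1}, the domain of integration is the ordered simplex $t\ge t_1\ge t_2\ge\cdots\ge t_k\ge 0$, and to show that the constraint $t_i=t_j$ enforced by $\delta(t_i-t_j)$ is incompatible, outside a set of measure zero, with this ordering whenever the indices $i$ and $j$ are non-adjacent.

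First I would integrate out the innermost variables $t_{j+1},\ldots,t_k$, which do not enter the Dirac delta; this simply replaces $Z$ by some function $\widetilde Z(t_1,\ldots,t_j)$ and reduces the problem to
\[
\int_{0}^{t}dt_1\cdots\int_{0}^{t_{j-1}}dt_j\ \widetilde Z(t_1,\ldots,t_j)\,\delta(t_i-t_j).
\]
Next I would carry out the $t_j$ integration. Since $t_j$ ranges over $[0,t_{j-1}]$ while $t_i$ (with $i<j$) is an outer, already fixed variable, the delta collapses $t_j\to t_i$ and contributes only when its support lies inside the range of integration, i.e. when $t_i\le t_{j-1}$.

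The decisive step is then to confront this with the surviving ordering constraints. The remaining nested integrals still enforce $t_i\ge t_{i+1}\ge\cdots\ge t_{j-1}$, hence $t_{j-1}\le t_i$; together with the condition $t_i\le t_{j-1}$ produced by the delta, this squeezes $t_i=t_{i+1}=\cdots=t_{j-1}$. Because $j>i+1$ there is at least one genuine intermediate integration variable (for instance $t_{i+1}$), so the domain of $(t_{i+1},\ldots,t_{j-1})$ is pinned to a single point, a set of Lebesgue measure zero, and the integral vanishes.

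The main (though mild) obstacle I anticipate is the rigorous handling of the Dirac delta as a distribution: one must verify that its support meets the integration region only along the degenerate diagonal $t_i=\cdots=t_{j-1}$ rather than transversally, which is precisely what the hypothesis $j>i+1$ guarantees. For the adjacent case $j=i+1$ there is no intermediate variable, the delta merely pins two otherwise free neighbouring times at the edge of their common range, and the integral is generically nonzero --- consistent with the fact that it is exactly these adjacent contractions that are retained (as the $b\,\delta(t_i-t_{i+1})$ terms) elsewhere in the proof.
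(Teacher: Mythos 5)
Your proof is correct and takes essentially the same approach as the paper: both carry out the $t_j$ integration against the delta and observe that its support condition $t_i\le t_{j-1}$ is incompatible with the ordering constraint $t_{j-1}\le t_i$ except on a measure-zero set, so the remaining integrations annihilate the contribution. The paper implements this via the explicit substitution $u=t_i-t_j$ and notes that the result is supported only on $\{t_i=t_{j-1}\}$, whereas you phrase it as pinning the intermediate variables $t_{i+1},\ldots,t_{j-1}$ to a point --- a purely cosmetic difference.
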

\begin{proof}
We use the change of variables as $u=t_i-t_j$ then
\begin{equation}\label{L32}
du=-dt_j,\ \quad t_j=0:\ u=t_i,\ \quad and\ \quad t_j=t_{j-1}:\ u=t_i-t_{j-1},
\end{equation}
so we keep $t_i$ and $u$. Now we define
\begin{eqnarray}\label{L33}\nonumber
K(t_1,\ldots,t_{j-1},t_{j+1},\ldots,t_k)&=:&\int_{0}^{t_{j-1}}dt_j\ Z(t_1,\ldots,t_k)\delta(t_i-t_j)\\\nonumber
&=&\int_{t_i-t_{j-1}}^{t_i}du\ \delta(u)Z(t_1,\ldots,t_i,\ldots,t_{j-1},t_i-u,t_{j+1},\ldots,t_k)=\left\{\begin{array}{cc}
{1} & {\text {\textit{for} $t_i=t_{j-1}$,}} \\
{0} & {\text {\textit{otherwise}.}}
\end{array}\right.\\
\end{eqnarray}
Since the function $K(t_1,\ldots,t_{j-1},t_{j+1},\ldots,t_k)$ is only nonzero at one point then its integral over $t_i$ and $t_{j-1}$ is zero, thus
\begin{equation}\label{L34}
\int_{0}^{t}dt_1\ldots\int_{0}^{t_{i-1}}dt_i\ldots\int_{0}^{t_{j-2}}dt_{j-1}K(t_1,\ldots,t_{j-1},t_{j+1},\ldots,t_k)=0,
\end{equation}
which completes the proof.
\end{proof}
\begin{sublemma}\label{lemma4.1}
\begin{equation}\label{L36}
\int_{0}^{t}dt_1\int_{0}^{t_1}dt_2\ \delta(t_1-t_2)=\int_{0}^{t}dt_1,
\end{equation}
where $t_1$ and $t_2$ are subsequent times in Dyson series.
\end{sublemma}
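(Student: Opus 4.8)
The plan is to evaluate the double integral by carrying out the inner integration over $t_2$ first, which collapses the Dirac delta, and then to confront the single genuinely delicate point: the support of $\delta(t_1-t_2)$ lies exactly on the boundary of the inner integration domain, so I must argue that it contributes its \emph{full} unit weight.

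First I would perform the change of variables $\tau = t_1 - t_2$ in the inner integral, so that $d\tau = -\,dt_2$ and the limits $t_2 = 0$, $t_2 = t_1$ map to $\tau = t_1$, $\tau = 0$ respectively. This recasts the inner integral as $\int_0^{t_1} d\tau\,\delta(\tau)$, in which the delta now sits at the lower endpoint $\tau = 0$. The time-ordering of the Dyson series fixes $t_1 > t_2$, so the argument of the delta is approached only from the positive side $\tau > 0$; this is precisely the regime in which the approximation $F^r(\tau) \approx b\,\delta(\tau)$ of Eq.~\eqref{+approximation} was derived. Because the normalization of this one-sided delta is inherited from $\int_0^{\infty} F^r(\tau)\,d\tau = b$ (equivalently $\mathcal{F}^r(0) = b$, read off from Eq.~\eqref{laplace-app}), the entire weight of the delta is captured, giving $\int_0^{t_1} d\tau\,\delta(\tau) = 1$ for every $t_1 > 0$. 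With the inner integral thus reduced to $1$, the outer integral is immediate, $\int_0^t dt_1 \cdot 1 = \int_0^t dt_1$, which is exactly the claimed identity.

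The only real obstacle is justifying that the boundary-supported delta yields its full weight $1$ rather than the symmetric half. I would resolve this exactly as the phrase ``subsequent times in Dyson series'' indicates: here $\delta(t_1-t_2)$ originates from $F^r(t_1-t_2)$ with $t_1$ and $t_2$ adjacent in the time-ordered product, so $t_1 - t_2 \geq 0$ and only the right-sided branch of the approximation $F^r(t) \approx b\,\delta(t)$, $t>0$, is ever probed. Consequently the relevant normalization is the one-sided $\int_0^{\infty} b\,\delta(\tau)\,d\tau = b$, i.e.\ the full weight, and no half-delta ambiguity arises. (This is also why the companion Lemma~\ref{lemma4} produces $0$ rather than $1$ when the indices are non-adjacent, $j>i+1$: in that case the delta's support falls strictly in the interior of the nested domain but is then annihilated by a further boundary integration, whereas in the adjacent case treated here it survives.)
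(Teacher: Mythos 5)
Your proposal is correct and follows essentially the same route as the paper's own proof: the substitution $u=t_1-t_2$ in the inner integral, mapping the limits $t_2=0,\,t_2=t_1$ to $u=t_1,\,u=0$, collapsing $\int_0^{t_1}du\,\delta(u)$ to $1$, and then performing the trivial outer integration. The only difference is that the paper simply asserts the inner integral equals $1$, whereas you explicitly justify the full (rather than half) weight of the boundary-supported delta via the one-sided normalization inherited from $\mathcal{F}^r(0)\simeq b$ in Eq.~\eqref{laplace-app} --- a point the paper leaves implicit, so your treatment is if anything more careful.
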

\begin{proof}
We use the following change of variables: $u=t_1-t_2$ then
\begin{equation}\label{L37}
du=-dt_2,\ \quad t_2=0:\ u=t_1,\ \quad and\ \quad t_2=t_1:\ u=0.
\end{equation}
So keeping $u$ and $t_1$ we have
\begin{equation}\label{L38}
\int_{0}^{t}dt_1\int_{0}^{t_1}du\ \delta(u)=\int_{0}^{t}dt_1,
\end{equation}
which completes the proof.
\end{proof}
\begin{remark}\label{remark1}
Lemma \ref{lemma4} states that the Dirac delta $\delta(t_i-t_j)$ with times that are not neighboring causes the expression to vanish and lemma \ref{lemma4.1} states that, for neighboring times, the Dirac delta $\delta(t_i-t_j)$ removes one integral.
\end{remark}
\section{Proof of Eq. \eqref{Nm62} i.e. checking normalization of the total state for initial vacuum}\label{PE62}
First let us see if the re-normalized time-evolution operator given in Eq. \eqref{Uint} preserves the norm of the state vector. Using Eqs. \eqref{AA} and \eqref{th1}-\eqref{th4} for the initial states $|\{0\}\rangle$ and $|\{1_k\}\rangle$ (one photon in each mode) of the field, respectively, we get
\begin{eqnarray}\label{ٔNorm0}\nonumber
S_{I_r}(t,0)|1,\{0\}\rangle&=&S_{I_r,11}(t,0)|1,\{0\}\rangle+S_{I_r,01}(t,0)|0,\{0\}\rangle\\\nonumber
&=&|1,\{0\}\rangle-i\int_{0}^{t}dt_1A'(t_1) |0,\{0\}\rangle\\\nonumber
&=&|1,\{0\}\rangle-i\int_{0}^{t} dt_1\int dk\ f'(\omega_k,t_1)|0,\{1_k\}\rangle.\\
\end{eqnarray}
For Schr\"{o}dinger picture, since $U(t) = e^{-iH^r_0 t}  U_{I_r}(t)$ using Eq. \eqref{Sth} we have
\begin{eqnarray}\label{Norm4}\nonumber
S(t,0)|1,\{0\}\rangle&=&e^{-iH_{0_r} t}S_{I_r}(t,0)|1,\{0\}\rangle\\\nonumber
&=&e^{-\gamma t|1\rangle\langle1|} V S_{I_r}(t,0)|1,\{0\}\rangle\\\nonumber
&=&V\left(e^{-\gamma t}|1,\{0\}\rangle-i\int_{0}^{t} dt_1\int dk\ f'(\omega_k,t_1)|0,\{1_k\}\rangle\right),\\
\end{eqnarray}
where
\begin{equation}\label{Norm5}
V=e^{-i(\omega_A |1\rangle\langle1|+H_F)t}
\end{equation}
is a unitary operator and $H_F$ is defined in Eq. \eqref{H0r}. Therefore using Eq. \eqref{Norm4},  we get
\begin{equation}\label{ٔNorm7a}
\langle \{0\},1|S^\dagger(t,0)S(t,0)|1,\{0\}\rangle=e^{-2\gamma t}+\mathcal{D}(t,0)\approx1,
\end{equation}
where $\mathcal{D}(t,0)$ is obtained as
\begin{eqnarray}\label{ٔNorm1a}\nonumber
\mathcal{D}(t,0)&=&\int dk\ |f(\omega_k)|^2\int_{0}^{t} dt_1e^{-\gamma t_1}e^{-i(\omega_k-\omega_A)t_1}\int_{0}^{t}dt'_1e^{-\gamma t'_1}e^{i(\omega_k-\omega_A)t'_1}\\\nonumber
&=&\int dk\ |f(\omega_k)|^2\left(\dfrac{e^{-\gamma t}e^{-i(\omega_k-\omega_A)t}-1}{-\gamma-i(\omega_k-\omega_A)}\right)\left(\dfrac{e^{-\gamma t}e^{i(\omega_k-\omega_A)t}-1}{-\gamma+i(\omega_k-\omega_A)}\right)\\\nonumber
&=&\int dk\ |f(\omega_k)|^2\dfrac{e^{-2\gamma t}+1-e^{-\gamma t}(e^{i(\omega_k-\omega_A)t}+e^{-i(\omega_k-\omega_A)t})}{\gamma^2+(\omega_k-\omega_A)^2}\\\nonumber
&=&(e^{-2\gamma t}+1)\int dk\ \dfrac{|f(\omega_k)|^2}{\gamma^2+(\omega_k-\omega_A)^2}-e^{-\gamma t}\int dk\ |f(\omega_k)|^2\dfrac{e^{i(\omega_k-\omega_A)t}+e^{-i(\omega_k-\omega_A)t}}{\gamma^2+(\omega_k-\omega_A)^2}\\\nonumber
&=&(e^{-2\gamma t}+1)\int d\omega\ \dfrac{4\pi\omega^2|f(\omega)|^2}{\gamma^2+(\omega-\omega_A)^2}-e^{-\gamma t}\int d\omega\ 4\pi\omega^2|f(\omega)|^2\dfrac{e^{i(\omega-\omega_A)t}+e^{-i(\omega-\omega_A)t}}{\gamma^2+(\omega-\omega_A)^2}.\\
\end{eqnarray}
Now we assume that $4\pi\omega^2|f(\omega)|^2$ is a slow varying function such that under the Lorentzian distribution $\dfrac{\gamma}{\gamma^2+(\omega-\omega_A)^2}$ we have $4\pi\omega^2|f(\omega)|^2\approx4\pi\omega_A^2|f(\omega_A)|^2$. Therefore we get
\begin{eqnarray}\label{ٔNorm1aa}\nonumber
\mathcal{D}(t,0)&\approx&\dfrac{4\pi\omega_A^2|f(\omega_A)|^2}{\gamma}\left(\pi(e^{-2\gamma t}+1)- e^{-\gamma t}\int d\omega\ \dfrac{\gamma}{\gamma^2+(\omega-\omega_A)^2}(e^{i(\omega-\omega_A)t}+e^{-i(\omega-\omega_A)t})\right)\\\nonumber
&=&\dfrac{4\pi\omega_A^2|f(\omega_A)|^2}{\gamma}\left(\pi(e^{-2\gamma t}+1)- e^{-\gamma t}\left(e^{-i\omega_At}\int d\omega\ \dfrac{\gamma e^{i\omega t}}{\gamma^2+(\omega-\omega_A)^2}+e^{i\omega_At}\int d\omega\ \dfrac{\gamma e^{-i\omega t}}{\gamma^2+(\omega-\omega_A)^2}\right)\right)\\\nonumber
&=&\dfrac{4\pi\omega_A^2|f(\omega_A)|^2}{\gamma}\left(\pi(e^{-2\gamma t}+1)-e^{-\gamma t}(e^{-i\omega_At}\pi e^{-\gamma t}e^{i\omega_At}+e^{i\omega_At}\pi e^{-\gamma t}e^{-i\omega_At})\right)\\\nonumber
&=&\dfrac{4\pi^2\omega_A^2|f(\omega_A)|^2}{\gamma}\left(e^{-2\gamma t}+1-2e^{-2\gamma t}\right)\\\nonumber
&=&1-e^{-2\gamma t},\\
\end{eqnarray}
in which we used the relation for Fourier transform of Lorentzian distribution
\begin{equation}\label{Norm1b}
\int_{-\infty}^{\infty}dx\ \dfrac{c}{c^2+(x-a)^2}e^{\pm ixt}=\pi e^{-c|t|}e^{\pm iat},
\end{equation}
and in the last equality of Eq. \eqref{ٔNorm1aa}  we used Eq. \eqref{DP3}.

\section{Expanding $\Lambda(t)$ up to the 2nd order}\label{2ndorder}

In order to confirm our conjecture let us now expand $\tilde{\rho}_A(t)$ up to the second order. First, let us calculate \eqref{nv4} with all the details as follows 
\begin{align*}\label{1st order}
\rho_A(t)&= \Lambda (t)\rho_A(0)= \int\mathcal{D}\xi\,e^{-\|\xi\|^2}U^{\xi}_{\alpha}(t) \rho_A(0)\bigl(U^{\xi}_{\alpha}(t)\bigr)^{\dagger}= \int\mathcal{D}\xi\,e^{-\|\xi\|^2} U_{\alpha}(t,0)\Big[\mathbb{I}-\int_0^t ds_1\Big(\frac{\gamma}{2}\tilde{P}_1(s_1)+if^*_{\xi}(s_1)\tilde{\sigma}^-(s_1)\Big]\Big)\rho_A(0)\\\nonumber
&\Big[\mathbb{I}-\int_0^t ds_2\Big(\frac{\gamma}{2}\tilde{P}_1(s_2)-if_{\xi}(s_2)\tilde{\sigma}^+(s_2)\Big)\Big] U_{\alpha}^\dagger(t,0)= U_{\alpha}(t,0)\Big[\rho_A(0) -\int_0^t ds_1\frac{\gamma}{2}\left (\tilde{P}_1(s_1)\rho_A(0)+\rho_A(0)\tilde{P}_1(s_1)\right)\\\nonumber
&+ \int\mathcal{D}\xi\,e^{-\|\xi\|^2} \int_0^t ds_1\int_0^t ds_2f^*_{\xi}(s_1)f_{\xi}(s_2)\tilde{\sigma}^-(s_1)\rho_A(0) \tilde{\sigma}^+(s_2)\Big]U_{\alpha}^\dagger(t,0)= U_{\alpha}(t,0)\Big[\rho_A(0) -\frac{\gamma}{2}\int_0^t ds_1\Big(\tilde{P}_1(s_1)\rho_A(0) \\\nonumber
&+\rho_A(0)\tilde{P}_1(s_1)\Big)+ \int_0^t ds_1\int_0^t ds_2\ e^{\gamma( s_1-s_2)}e^{-i\omega_A(s_1-s_2)}F^r(s_1-s_2)
\tilde{\sigma}^-(s_1)\rho_A(0) \tilde{\sigma}^+(s_2)\Big]U_{\alpha}^\dagger(t,0)\\\nonumber
&\approx U_{\alpha}(t,0)\Big[\rho_A(0) -\frac{\gamma}{2}\int_0^t ds_1\left(\tilde{P}_1(s_1)\rho_A(0)+\rho_A(0)\tilde{P}_1(s_1)\right)\\ \nonumber
&+ \int_0^t ds_1\int_0^{s_1} ds_2\ \gamma e^{\gamma (s_1-s_2)}e^{-i\omega_A(s_1-s_2)}\delta(s_1-s_2) \tilde{\sigma}^-(s_1)\rho_A(0) \tilde{\sigma}^+(s_2)\Big]U_{\alpha}^\dagger(t,0)\\\nonumber
&= U_{\alpha}(t,0)\Big[\rho_A(0) - \frac{\gamma}{2}\int_0^t ds_1\left(\tilde{P}_1(s_1)\rho_A(0)+\rho_A(0)\tilde{P}_1(s_1)\right)+ \gamma\int_0^t ds_1\  \tilde{\sigma}^-(s_1)\rho_A(0)\tilde{\sigma}^+(s_1)\Big]U_{\alpha}^\dagger(t,0).\numberthis
\end{align*}
For the second order expansion we use Eq. \eqref{nv3} and we get 
\begin{align*}\label{nv7}
\rho_A(t)&= \Lambda (t){\rho}_A^{\alpha}(0)
= \int\mathcal{D}\xi\,e^{-\|\xi\|^2} U^{\xi}_{\alpha}(t) \rho_A(0)\bigl(U^{\xi}_{\alpha}(t)\bigr)^{\dagger}
= U_{\alpha}(t,0)\Big[{\rho}_A^{\alpha}(0) -\int_0^t ds_1\frac{\gamma}{2}\Big(\tilde{P}_1(s_1){\rho}_A^{\alpha}(0)+{\rho}_A^{\alpha}(0)\tilde{P}_1(s_1)\Big) \\ \nonumber
&- \int\mathcal{D}\xi\,e^{-\|\xi\|^2} \int_0^t ds_1\int_0^t ds_2f^*_{\xi}(s_1)f_{\xi}(s_2)\tilde{\sigma}^-(s_1){\rho}_A^{\alpha}(0)\tilde{\sigma}^+(s_2) + \frac{\gamma^2}{4}\int_0^t ds_1\int_0^t ds'_1 \tilde{P}_1(s_1){\rho}_A^{\alpha}(0)\tilde{P}_1(s'_1) \\ \nonumber
&- \frac{\gamma}{2}\int\mathcal{D}\xi\,e^{-\|\xi\|^2}\int_0^t ds_1\int_0^t ds'_1\int_0^{s'_1}ds'_2\Big(f^*_{\xi}(s_1)f_{\xi}(s'_2) \tilde{\sigma}^-(s_1){\rho}_A^{\alpha}(0)\tilde{\sigma}^+(s'_2)\tilde{P}_1(s'_1) \\ \nonumber
&+f^*_{\xi}(s_1)f_{\xi}(s'_2)\tilde{P}_1(s'_1)\tilde{\sigma}^-(s'_2){\rho}_A^{\alpha}(0)\tilde{\sigma}^+(s_1)+f^*_{\xi}(s_1)f_{\xi}(s'_1)\tilde{\sigma}^-(s_1){\rho}_A^{\alpha}(0)\tilde{P}_1(s'_2)\tilde{\sigma}^+(s'_1)\\ \nonumber
&+f^*_{\xi}(s_1)f_{\xi}(s'_1)\tilde{\sigma}^-(s'_1)\tilde{P}_1(s'_2){\rho}_A^{\alpha}(0)\tilde{\sigma}^+(s_1)\Big)
+ \frac{\gamma^2}{4}\int_0^t ds_1\int_0^{s_1} ds'_1\Big({\rho}_A^{\alpha}(0)\tilde{P}_1(s'_1)\tilde{P}_1(s_1)+ \tilde{P}_1(s_1)\tilde{P}_1(s'_1){\rho}_A^{\alpha}(0)\Big)\\\nonumber
&+ \int\mathcal{D}\xi\ e^{-\|\xi\|^2} \int_0^t ds_1\int_0^{s_1} ds'_1\int_0^t ds_2\int_0^{s_2} ds'_2 f^*_{\xi}(s_1)f_{\xi}(s_2)f^*_{\xi}(s'_1)f_{\xi}(s'_2)\tilde{\sigma}^-(s_1)\tilde{\sigma}^-(s'_1) {\rho}_A^{\alpha}(0)\tilde{\sigma}^+(s'_2)\tilde{\sigma}^+(s_2)\Big]U_{\alpha}^\dagger(t,0).\numberthis
\end{align*}
Now applying the approximation for $F^r(t-s)$ carefully we get
\begin{align*}\label{nv8}
\rho_A(t)&= \Lambda (t)\rho_A(0)
\approx U_{\alpha}(t,0)\Big[\rho_A(0) -\int_0^t ds_1\frac{\gamma}{2}\left(\tilde{P}_1(s_1)\rho_A(0)+\rho_A(0)\tilde{P}_1(s_1)\right)+ \gamma\int_0^t ds_1\  \tilde{\sigma}^-(s_1)\rho_A(0)\tilde{\sigma}^+(s_1)\\\nonumber
&+ \frac{\gamma^2}{4}\int_0^t ds_1\int_0^t ds'_1 \tilde{P}_1(s_1)\rho_A(0)\tilde{P}_1(s'_1)
- \frac{\gamma^2}{2}\int_0^t ds'_1\int_0^{s'_1} ds'_2 \Big(\tilde{\sigma}^-(s'_2)\rho_A(0)\tilde{\sigma}^+(s'_2)\tilde{P}_1(s'_1)+ \tilde{P}_1(s'_1)\tilde{\sigma}^-(s'_2)\rho_A(0)\tilde{\sigma}^+(s'_2)\Big)\\\nonumber
&- \frac{\gamma^2}{2}\int_0^t ds_1\int_0^{s_1} ds'_2\ \Big(\tilde{\sigma}^-(s_1)\rho_A(0)\tilde{P}_1(s'_2)\tilde{\sigma}^+(s_1)+ \tilde{\sigma}^-(s_1)\tilde{P}_1(s'_2)\rho_A(0)\tilde{\sigma}^+(s_1)\Big)+ \frac{\gamma^2}{4}\int_0^t ds_1\int_0^t ds'_1\Big(\rho_A(0)\tilde{P}_1(s'_1)\tilde{P}_1(s_1)\\\nonumber
&+ \tilde{P}_1(s_1)\tilde{P}_1(s'_1)\rho_A(0)\Big)+ \gamma^2\int_0^t ds_1\int_0^{s_1} ds'_1\ \tilde{\sigma}^-(s_1)\tilde{\sigma}^-(s'_1)\rho_A(0)
\tilde{\sigma}^+(s_1)\tilde{\sigma}^+(s'_1)\Big]U_{\alpha}^\dagger(t,0).\numberthis
\end{align*}
Now writing the solution of Eq. \eqref{nv51} to the second order in Dyson series, and going back to Schr\"{o}dinger picture, we obtain 
\begin{align*}\label{nv9}
     \Lambda (t)(\cdot)&= \mathcal{U}_{\alpha}(t,0)\Big[\mathbb{I}-\frac{\gamma}{2}\int_0^t ds_1\{\tilde{P}_1(s_1), (\cdot)\}
     + \gamma\int_0^t ds_1\  \tilde{\sigma}^-(s_1)(\cdot)\tilde{\sigma}^+(s_1)+ \frac{\gamma^2}{4}\int_0^t ds_2\Big\{\tilde{P}_1(s_2), \int_0^{s_2} ds_1\{\tilde{P}_1(s_1), (\cdot)\}\Big \}\\\nonumber 
     &- \frac{\gamma^2}{2}\int_0^t ds_2 \int_0^{s_2} ds_1\  \Big\{\tilde{P}_1(s_2),\tilde{\sigma}^-(s_1)(\cdot)\tilde{\sigma}^+(s_1)\Big\}- \frac{\gamma^2}{2}\int_0^t ds_2 \int_0^{s_2} ds_1\  \tilde{\sigma}^-(s_2)\{\tilde{P}_1(s_1), (\cdot)\}\tilde{\sigma}^+(s_2)\\\nonumber
     &+ \gamma^2 \int_0^t ds_1 \int_0^{s_1} ds_2\  \tilde{\sigma}^-(s_1)\tilde{\sigma}^-(s_2)(\cdot)\tilde{\sigma}^+(s_2)\tilde{\sigma}^+(s_1)
     \Big] + \mathcal{O}(\gamma^3).\numberthis
\end{align*}
Hence for the evolution of the state of the atom, in the Schr\"{o}dinger picture, we have
\begin{align*}\label{nv10}
     \rho_A(t)&= \Lambda (t){\rho}_A^{\alpha}(0)= \mathcal{U}_{\alpha}(t,0)\Big[\mathbb{I}-\frac{\gamma}{2}\int_0^t ds_1\{\tilde{P}_1(s_1), {\rho}_A^{\alpha}(0)\}+ \gamma\int_0^t ds_1\  \tilde{\sigma}^-(s_1){\rho}_A^{\alpha}(0)\tilde{\sigma}^+(s_1)\\ \nonumber
     &+  \frac{\gamma^2}{4}\int_0^t ds_2\int_0^{s_2} ds_1\Big(\tilde{P}_1(s_2)\tilde{P}_1(s_1){\rho}_A^{\alpha}(0)+{\rho}_A^{\alpha}(0)\tilde{P}_1(s_1)\tilde{P}_1(s_2)\Big)+ \frac{\gamma^2}{4} 2 \int_0^t ds_2\int_0^{s_2} ds_1\tilde{P}_1(s_1){\rho}_A^{\alpha}(0)\tilde{P}_1(s_2)\\\nonumber
     &-  \frac{\gamma^2}{2}\int_0^t ds_2\int_0^{s_2} ds_1\ \Big(\tilde{P}_1(s_2)\tilde{\sigma}^-(s_1){\rho}_A^{\alpha}(0)\tilde{\sigma}^+(s_1)+\tilde{\sigma}^-(s_1){\rho}_A^{\alpha}(0)\tilde{\sigma}^+(s_1)\tilde{P}_1(s_2)\Big)\\\nonumber
     &- \frac{\gamma^2}{2}\int_0^t ds_2\int_0^{s_2} ds_1\ \Big(\tilde{\sigma}^-(s_2)\tilde{P}_1(s_1){\rho}_A^{\alpha}(0)\tilde{\sigma}^+(s_2)+\tilde{\sigma}^-(s_2){\rho}_A^{\alpha}(0)\tilde{P}_1(s_1)\tilde{\sigma}^+(s_2)\Big)\\\nonumber
     &+ \gamma^2\int_0^t ds_2\int_0^{s_2} ds_1\  \tilde{\sigma}^-(s_2)\tilde{\sigma}^-(s_1){\rho}_A^{\alpha}(0)\tilde{\sigma}^+(s_1) \tilde{\sigma}^+(s_2)\Big] + \mathcal{O}(\gamma^3).\numberthis
\end{align*}

Then comparing Eqs. \eqref{nv8} and \eqref{nv10} we see that they are the same. The only term which is non-trivial to see it is 
\begin{equation}
    \frac{\gamma^2}{4} 2 \int_0^t ds_2\int_0^{s_2} ds_1\tilde{P}_1(s_1){\rho}_A^{\alpha}(0)\tilde{P}_1(s_2)
\end{equation}
but checking the integration limits, we can see that $2 \int_0^t ds_2\int_0^{s_2} ds_1\tilde{P}_1(s_1){\rho}_A^{\alpha}(0)\tilde{P}_1(s_2)$ is equal to $\int_0^t ds_2\int_0^{t} ds_1\tilde{P}_1(s_1){\rho}_A^{\alpha}(0)\tilde{P}_1(s_2)$ as we want. Therefore, up to second order in $\gamma$, we see that the evolution is given by the map $\mathcal{T} e^{\tilde{\Lambda}(t)}$ where
\begin{equation}
   \tilde{\Lambda}(t)=-\frac{\gamma}{2}\int_0^t ds_1\{\tilde{P}_1(s_1), (\cdot)\}
     + \gamma\int_0^t ds_1\ \tilde{\sigma}^-(s_1)(\cdot)\tilde{\sigma}^+(s_1).
\end{equation}

\end{document}